\documentclass[12pt]{article}
\expandafter\let\csname equation*\endcsname\relax

\expandafter\let\csname endequation*\endcsname\relax

\usepackage{amsmath}
\usepackage{amsthm}
\usepackage{xcolor}
\usepackage{amssymb}
\usepackage{mathtools}
\usepackage{thmtools}
\usepackage{hyperref}
\usepackage{cleveref}
\newcommand\Mycomb[2][^n]{\prescript{#1\mkern-0.5mu}{}C_{#2}}
\def\bn#1{\mathbf{#1}}
\newcommand{\Set}[2]{%
	\{\, #1 \mid #2 \, \}%
}
\newcommand{\df}{\stackrel{def}{=}}

\newtheorem{lemma}{Lemma}
\newtheorem{proposition}{Proposition}
\theoremstyle{definition}
\newtheorem{littheorem}{Literature Theorem}[section]
\newtheorem{definition}{Definition}[section]
\newtheorem{remark}{Remark}[section]
\DeclareMathOperator*{\argmin}{arg\,min}


\title{Compact Representation of  $n^{th}$ order TGV}

\author{Manu Ghulyani \& Muthuvel Arigovindan}
\begin{document}
\maketitle

{Department of Electrical Engg., Indian Institute of Science, Bengaluru-12, Karnataka, India.}\\
{manug@iisc.ac.in \& mvel@iisc.ac.in }
\vspace{10pt}

\begin{abstract}
Although regularization methods based on derivatives are favored for their robustness and computational simplicity, research exploring higher-order derivatives remains limited. This scarcity can possibly be attributed to the appearance of oscillations in reconstructions when directly generalizing TV-1 to higher orders (3 or more). Addressing this, Bredies et. al introduced a notable approach for generalizing total variation, known as Total Generalized Variation (TGV). This technique introduces a regularization that generates estimates embodying piece-wise polynomial behavior of varying degrees across distinct regions of an image.Importantly, to our current understanding, no sufficiently general algorithm exists for solving TGV regularization for orders beyond 2 (i.e., $\geq 3$). This is likely because of two problems: firstly, the problem is complex as TGV regularization is defined as a minimization problem with non-trivial constraints, and secondly, TGV is represented in terms of tensor-fields which is difficult to implement. In this work we tackle the first challenge by giving two simple and implementable representations of $n^{th}$ order TGV \end{abstract}

%
%
%
%
%

\section{Notation and Preliminaries}\label{ch4:sec:notation}
\begin{enumerate}
	\item(Permutation) In this work, $\pi :\{1,2,3,...,k\}\rightarrow \{1,2,3,...,k\}$ is an (invertible) map called as a permutation. With this definition we can define a map $f_{\pi}:\mathbb R^{k}\rightarrow \mathbb R^k$, such that for  any $\mathbf v = (v_1,v_2,...,v_k)^T\in \mathbb R^k$,  $f_{\pi}(\mathbf v)\df ({ v}_{\pi(1)},{ v}_{\pi(2)},...,{ v}_{\pi(k)})^T$.  For example, let $k=3$, and  $\pi$ is defined such that: $\pi(1)=2, \pi(2)=3$ and $\pi(3)=1$. Then, for any $\bn v=(v_1,v_2,v_3)^T,f_{\pi}(\bn v)=(v_{\pi(1)}, v_{\pi(2)}, v_{\pi(3)})^T=(v_2,v_3,v_1)^T.$ We denote the set of of permutations of k-letters as $S_k.$  
	\item (Binary representation)  In this work, $b:\{0,1,2,3,...,2^k-1\}\rightarrow \{0,1\}^k$ gives the $k$ letter binary code i.e, $b$ gives a vector of $0s$ and ones for any non-negative integer less than $2^k$, and $b^{-1}$ is its inverse which returns a non-negative integer for any vector in $\{0,1\}^k.$ For example, if $k=3$, then $b(5)=[1,0,1]^T$ and similarly, $b^{-1}([0,1,1]^T)=3.$
	\item (symmetric index vector) For any non-negative integer $j\le 2^{k}-1,$ we can define an index vector $\bn t^{j,k}$ as $\bn t^{j,k}=[b^{-1}f_{\pi_1}b(j),b^{-1}f_{\pi_2}b(j),...,b^{-1}f_{\pi_{k!}}b(j)].$ Here, $\pi_1,..,\pi_{k!}$ are the elements of $S_k$ in any fixed order. As an example, consider $k=2$ and $j=2$. In this case there are two permutations, $\pi_1$ is the identity map   and $\pi_2$ is defined such that: $\pi_2(1)=2$ and $\pi_2(2)=1.$ Therefore, $\bn t^{2,2}=[2,1]^T$.
	\item 
	We define a linear operator, $\Pi^{(k)}:\mathbb R^{N\times 2^k}\rightarrow \mathbb R^{N\times 2^k}$. For any $ P\in \mathbb R^{N\times 2^k},$ the $(i,j)^{th}$ element of $(\Pi^{(k)}(P))$ is given as $$(\Pi^{(k)}(P))_{i,j}\df \frac{P_{i,\bn t^{j,k}_1}+P_{i,\bn t^{j,k}_2}+...+P_{i,\bn t^{j,k}_{k!}}}{k!}.$$Here, $\bn t^{j,k}=[\bn t^{j,k}_1,\bn t^{j,k}_2,...,\bn t^{j,k}_{k!}]^T$ is the symmetric index vector as defined above. It can be observed that  the $j^{th}$ column of $(\Pi^{(k)}P)$  is the   sum the columns of the  matrix $P$ given by the column indices $\{b^{-1}f_{\pi}(b(j))\}_{\pi\in {S_k} }$ for different permutations $\pi$. 
	
	\item Consider a scanned image containing $N$ ordered pixels, then  $\mathbf D_x$ and $\mathbf D_y$ are matrix-equivalent of discrete derivatives in $x$ and $y$ directions respectively applied directly on the scanned image. For example, let $\mathbf u$ be an image having $N$ pixels, then $\mathbf D_x \bn u$ is the derivative image in $x$ direction, and similarly $\bn D_y
	\bn u$ in y direction. As  we consider images in scanned form  we have  $\bn u \in \mathbb R^N$; therefore, $\bn D_x$ and $\bn D_y$ are $N\times N$ block circulant matrices with circulant blocks (BCCB), and the multiplication of these matrices with any vector in $\mathbb R^N$ is implemented by $2D$ convolution with filters $(-1,1)$ and $(-1,1)^T$ respectively.
	\item The iterated derivative operator $\mathcal D_k:\mathbb R^{N \times (k+1) }\rightarrow \mathbb R^{N \times (2k+2) }$, for any $(\bn y_0,...,\bn y_k)\in \mathbb R^{N\times (k+1)} \text{ is defined as} ,$\\$$\mathcal D_k([\mathbf y_0, \mathbf y_1,...,\mathbf y_k])\df[\mathbf D_x \mathbf y_0,\mathbf D_y \mathbf y_0,\mathbf D_x \mathbf y_1,\mathbf D_y \mathbf y_1,...,\mathbf D_x \mathbf y_k,\mathbf D_y \mathbf y_k].$$
	\item 	We need a scaling linear operator ($\mathcal A_k$) to define the compact form of TGV, $\mathcal       A_k:\mathbb R^{N \times (2k+2) }\rightarrow \mathbb R^{N \times (k+2) }$,   
	such that, $\mathcal A_k(\mathbf z)\df \mathbf z\cdot M_k$, \text{where} $$M_k=\begin{bmatrix} &1 & 0 &0&...&0&0\\&0&\frac{\sqrt{\Mycomb[k]{1}}}{\sqrt {\Mycomb[k+1]{1}}}&0&...&0&0\\ &0&\frac{\sqrt{\Mycomb[k]{0}}}{\sqrt {\Mycomb[k+1]{1}}}&0&...&0&0\\&0&0&\frac{\sqrt{\Mycomb[k]{2}}}{\sqrt {\Mycomb[k+1]{2}}}&...&0&0\\
	&0&0&\frac{\sqrt{\Mycomb[k]{1}}}{\sqrt {\Mycomb[k+1]{2}}}&...&0&0\\
	&\vdots&\vdots&\vdots&...&\vdots&\vdots\\
	&0&0&0&...&\frac{\sqrt{\Mycomb[k]{k}}}{\sqrt {\Mycomb[k+1]{k}}}&0\\
	&0&0&0&...&\frac{\sqrt{\Mycomb[k]{k-1}}}{\sqrt {\Mycomb[k+1]{k}}}&0\\
	&0&0&0&...&0&1\\
	\end{bmatrix}.$$ 
	\item Let $\mathcal A$ be a linear operator, then $\mathcal N(\mathcal A)$ denotes the null space of $\mathcal A$ and $\mathcal R(\mathcal A)$ denotes the range space of $\mathcal A$.
	\item We need the definition of proximal operator to define the image restoration algorithm. For any lower-semi-continuous, convex and closed function $f:\mathbb R^N\rightarrow \mathbb R$. The proximal of $f()$  is given as:
	$$prox_{f}(\bn z)\df \inf_{x}f(\bn x)+\frac{1}{2}\|\bn x-\bn z\|^2.$$
	\item \
	The mixed norm ($\|\cdot\|_{1,2}$) for any $\mathbf M \in \mathbb R^{N\times p} $ is defined as:$$\|\mathbf M\|_{1,2}=\sum_{j=1}^{N}(\sum_{i=1}^{p}\mathbf M_{j,i}^2)^{1/2}.$$
	\item Proximal of $\|\|_{1,2}$ norm: Consider $\bn A\in \mathbb R^{M\times N}$ then \begin{align}prox_{\|\cdot\|_{1,2}}(\bn A)&=\argmin_{\bn B}[\|\bn B\|_{1,2}+\frac{1}{2}\|\bn B -\bn A\|^2]\\&=\argmin_{\bn B}\sum_{i=1}^M\big[\sum_{j=1}^N((\bn B-\bn A)_{i,j})^2+(\sum_{j=1}^N \bn B_{i,j}^2)^{\frac{1}{2}}\big]
	\end{align}
	The above optimization is separable in $i$ (row index). Hence, $i^{th}$ row  of the minimizer of the above expression $\bn A^*$ which is same as $prox_{\|\cdot\|_{1,2}}(\bn A)$  can be given as: 
	\begin{align}
	&\bn A^*_{i,:}=prox_{\|\cdot\|_2}(\bn A_{i,:})\\&=\max(0,1-\frac{1}{\sqrt{\sum_j \bn A_{i,j}^2}})\bn A_{i,:}.
	\end{align}
	Therefore, the proximal can be obtained by performing proximal of $l_2$ norm on each row.
\end{enumerate}

\section{Introduction to Total Generalized Variation (TGV)}
One of the most important regularizations for image restoration is Total Variation (TV) \cite{rudintv}. TV penalizes the sum  of $l_2$ norm of the gradient of the image. Therefore, the resultant image is piece-wise constant. In our notation,
$$TV^{1}(\bn u)=\|\mathcal D_{0}\bn u\|_{1,2}=\|(\mathbf D_x  \bn u ,\mathbf D_y \bn u)\|_{1,2}.$$ The above concept can be extended to second-order derivatives also as follows: \begin{align}&TV^{2}(\bn u)=\|\mathcal  D_1 ( \mathcal D_{0}(\bn u))\|_{1,2}=\|(\mathbf D_x \mathbf D_x \bn u ,\mathbf D_y \mathbf D_x \bn u ,\mathbf D_x \mathbf D_y \bn u,\mathbf D_y \mathbf D_y \bn u)\|_{1,2}. \end{align} 
Analogously, one can define $TV^n$. Research has indicated, particularly for 1-dimensional signals, that employing the aforementioned $TV^n$ regularization yields a solution that takes the form of a linear combinations of polynomials with a fixed degree of $n-1$ \cite{unser_represent}. However, images are typically better characterized by piece-wise smooth polynomials, which might possess varying degrees across different regions, rather than adhering strictly to a fixed polynomial degree of $n-1$ across the entire image. Consequently, there arises a necessity for a more adaptable and robust extension of the Total Variation (TV) concept.

TGV, an influential contribution by \cite{tgv}, demonstrates how the restored image is represented as a linear combination of polynomials with varying degrees across distinct image sections. In other words, TGV has the capability to generate solutions that manifest as piece-wise polynomials of diverse degrees in different parts of the image. For a thorough mathematical exploration, readers can refer into the details presented in \cite{tgv}.

Original formulations of TGV are rooted in a continuous domain rather than a discrete pixel grid. In this context, the formulation of second-order TGV, denoted as $TGV^2$, can be expressed as follows:
\begin{align}TGV^2( u)=\\ &  \nonumber \inf_{v_1,v_2\in \mathcal S_\Omega}\underbrace{\int_{\Omega}\big[\big(\frac{\partial u}{\partial x}-v_1 \big)^2+\big(\frac{\partial u}{\partial y}-v_2 \big)^2\big]^{1/2}dxdy}_{A}+\\&\underbrace{\int_{\Omega}\big[\big(\frac{\partial v_1}{\partial x} \big)^2+\big(\frac{\partial v_2}{\partial y} \big)^2+\frac{1}{2}\big(\frac{\partial v_1}{\partial y}+ \frac{\partial v_2}{\partial x}\big)^2\big]^{1/2}dxdy}_B.\end{align}
In the given equation, observe that component "A" effectively matches the partial derivatives of $v_1$ and $v_2$, while component "B" further refines the matched partial derivatives ($v_1$ and $v_2$) by incorporating a second-order total variation regularization. This pattern can be extended generally by iteratively fitting the $(n-1)$-th order derivative and then applying regularization using the $n$-th order derivative. The original continuous TGV 
formulation \cite{tgv} (eq. 3.6) is given as \cite{tgv}:\begin{align}\label{tgv_cont}\textit{TGV}^n(g)=\inf_{ u_i\in \mathbb S^{(i)}_{\Omega},u_0=g,u_n=0}\sum_{i=0}^{n-1}\| \epsilon u_i-u_{i+1}\|.\end{align} Here, $\mathbb S_{\Omega}^{i}$ is the space of $i$ dimensional symmetric tensor fields on $\Omega$ having bounded deformation and $\epsilon$ is the symmetric derivative. Compared with the formulation given in this paper,   $\epsilon$ is analogous to $\Pi^{(i)}\circ \mathcal D_{2^i-1}$. It can be noted that $\Pi^{(i)}\circ \mathcal D_{2^i-1}$ varies with $i$ (the derivative order) while $\epsilon$ does not as  the authors have overloaded the operator for various derivative orders, but we have defined a different operator for each derivative order ($i$). This does not create any difference in the formulation. Also, $\mathcal S_{\Omega}$ is the set of  functions on the set $\Omega$ having bounded deformation \cite{tgv}. For details, the reader can refer to \cref{ch4:sec:proof_rep}.\\ For $n=2$, the above continuous formulation can be written in discrete form as :
$$TGV^{2}(\bn u)=\inf_{\bn p \in \mathbb R^{N\times 2}}\alpha_1\|\mathcal D_{0} \bn u-\bn p\|_{1,2}+\alpha_0 \|\Pi^{(2)}\mathcal D_1 \bn p\|.$$
In the given equation, the parameters $\alpha_1$ and $\alpha_0$ control the regularization. When $\alpha_1$ becomes exceedingly large, the regularization effectively becomes equivalent to $TV^2$. Conversely, when $\alpha_0$ tends towards infinity, the regularization behaves like $TV^1$. This intriguingly means that the Total Generalized Variation ($TGV$) approach has the flexibility to emulate both $TV^1$ and $TV^2$ regularization based on different choices of parameters, all while adapting spatially due to the variable $\bn p$. This is because, in regions where $\bn p=\bn 0$, the regularization takes on the characteristics of $TV^1$, while in regions where $\bn p=\mathcal D_0\bn u$, the regularization functions as $TV^2$.

The  continuous version of TGV (\cref{tgv_cont}) needs to be discretized for practical implementation. Although TGV is originally defined for continuous images (smooth functions on $\mathbb R^2$) and tensor fields; tensors are not necessary to describe the discretized TGV (as we show in this work). In this work, we give matrix-based formulation of TGV. Removing this additional hurdle to learning the tensor machinery makes this accessible to a wider audience.

There  are many algorthms that solve image restoration with TGV regularization.
Originally, TGV was proposed by \cite{tgv}. In this work, they gave a primal-dual based algorithm that could solve the image-denoising problem up to third-order TGV. They also gave many structural and theoretical properties of TGV. In another work \cite{tgv_numerics}, the authors gave a primal-dual scheme to perform image decompression, zooming, and image reconstruction using second-order TGV. MRI reconstruction with second-order TGV was proposed in \cite{TGV2}.
\subsection{Tensor-free and Compact representation of TGV}
It is important to note that because of the complexity of TGV for orders ($\ge 3$), higher-order TGV is mainly unexplored for non-trivial inverse problems.  As 2-tensors can be easily understood as vector images,  discretizing and implementing TGV-2 is simple and easier to implement, hence, most works focus only on TGV-2. Only the work by \cite{tgv} focuses on TGV order 3 for the denoising problem. To the best of our knowledge, the work described in this article is the first work that gives a  general algorithm to solve a linear inverse problem for any order of TGV. Now, we give a theorem that established the direct matrix version of TGV.
\begin{restatable}{theorem}{tgvdir}\label{ch4:th:tgvdir} (Direct Tensor-free Representation of TGV) Let $\bn g $ be an image having N pixels. Then, the total generalized variation in discrete form is given as:
	\begin{equation}\label{eq:tgv_direct}
	TGV^n(\mathbf g)=\inf_{\mathbf p_0=\mathbf g,\mathbf p_n=\mathbf 0,\mathbf p_i\in \mathbb R^{N\times (2^i)},\bn p_i\in \mathcal R(\Pi^{(i+1)})}\sum_{i=0}^{n-1}\alpha_{n-i-1}\|  \Pi^{(i+1)} \mathcal  D_{2^i-1} \mathbf p_i-\mathbf p_{i+1}\|_{1,2}.
	\end{equation}
	Here, $\alpha_0,...,\alpha_{n-1}$ are the regularization parameters and $ \Pi^{(i+1)}$ and $\mathcal D_{2^i-1}$ are as given in the \cref{ch4:sec:notation}. 
	
\end{restatable}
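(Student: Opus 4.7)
The plan is to establish the theorem as a term-by-term translation of the continuous formulation in \cref{tgv_cont} under the matrix encoding of \cref{ch4:sec:notation}. First I would fix a representation of discrete symmetric tensor fields: an $i$-th order tensor field over $N$ pixels is stored as a matrix in $\mathbb R^{N\times 2^i}$ whose $j^{\text{th}}$ column holds the component with multi-index $b(j)\in\{0,1\}^i$ (with $0$ standing for the $x$-axis and $1$ for the $y$-axis). From its defining formula, $\Pi^{(i)}$ averages the columns whose multi-indices are permutations of one another, so it is an idempotent projection whose range is exactly the subspace of matrices representing symmetric tensors. The continuous constraints $u_i\in\mathbb S^{(i)}_\Omega$, $u_0=g$, $u_n=0$ therefore translate into $\mathbf p_i\in\mathcal R(\Pi^{(i)})$, $\mathbf p_0=\mathbf g$, and $\mathbf p_n=\mathbf 0$.

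The second step is to identify the composition $\Pi^{(i+1)}\circ\mathcal D_{2^i-1}$ with the symmetric derivative $\epsilon$ on this encoding. Reading off $\mathcal D_{2^i-1}$ applied to a matrix of $2^i$ columns gives a $2^{i+1}$-column matrix obtained by interleaving $\mathbf D_x$ and $\mathbf D_y$ on each column; checking the binary index map $b$ shows that this is exactly the operator which appends one more axis to every multi-index, i.e.\ the unsymmetrized partial-derivative rule that promotes an $i$-tensor to an $(i+1)$-tensor. Post-composing with $\Pi^{(i+1)}$ then averages over the $(i+1)!$ permutations of the enlarged multi-index, which is the defining action of $\epsilon$ on symmetric tensor fields. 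This yields the term-by-term correspondence $\|\epsilon u_i-u_{i+1}\|\;\longleftrightarrow\;\|\Pi^{(i+1)}\mathcal D_{2^i-1}\mathbf p_i-\mathbf p_{i+1}\|_{1,2}$, with the weights $\alpha_{n-i-1}$ matching those of the standard weighted continuous formulation.

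The most delicate step is matching the norms, because the continuous $L^1$ norm on a symmetric $(i+1)$-tensor field counts each distinct orbit of components once with a multinomial weight, while $\|\cdot\|_{1,2}$ on $\mathbb R^{N\times 2^{i+1}}$ sums over all $2^{i+1}$ redundant columns. For any symmetric tensor $T$ with common value $T^{(k)}$ on the orbit of multi-indices containing exactly $k$ ones, the algebraic identity
\begin{align*}
\sum_{(\alpha_1,\dots,\alpha_{i+1})\in\{0,1\}^{i+1}} T_{\alpha_1\cdots\alpha_{i+1}}^2 \;=\; \sum_{k=0}^{i+1}\binom{i+1}{k}\bigl(T^{(k)}\bigr)^2
\end{align*}
shows that summing over the redundant tuples exactly reproduces the multinomial-weighted Frobenius norm used in the continuous setting. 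Combining this norm equivalence with the operator identifications above gives the desired match between the two sums in \cref{tgv_cont} and \cref{eq:tgv_direct}. The main obstacle throughout is careful bookkeeping of the binary indexing under symmetrization, which is precisely what $\Pi^{(i)}$ is designed to handle.
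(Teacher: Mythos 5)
Your proposal follows essentially the same route as the paper: encode discrete symmetric tensor images as $N\times 2^i$ matrices indexed by binary multi-indices, identify $\Pi^{(i+1)}\circ\mathcal D_{2^i-1}$ with the symmetric derivative $\epsilon_i$ (the paper's \cref{prop:direct}), translate the symmetry constraint into $\mathbf p_i\in\mathcal R(\Pi^{(i)})$, and match the tensor norm with $\|\cdot\|_{1,2}$. The only cosmetic difference is that the paper's tensor norm is defined directly as a sum over all $2^{i+1}$ tuples, so it equals $\|\cdot\|_{1,2}$ of the standard-basis matrix by definition; your orbit-counting identity, while correct, is the normalization the paper only needs later for the compact representation.
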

As an  example:  $TGV^3$ is given as : $$TGV^{3}(\bn u)=\inf_{\bn p_1\in \mathbb R^{N\times 2}, \bn p_2\in  \mathbb R^{N\times 4}, \bn p_2\in \mathcal R(\Pi^{(2)})}\alpha_2\|\mathcal D_{0} \bn u-\bn p_1\|_{1,2}+\alpha_1 \|\Pi^{(2)}\mathcal D_1 \bn p_1-\bn p_2\|+\alpha_0\|\Pi^{(3)}\mathcal D_3\bn p_2\|.$$
The aforementioned direct form is impractical for implementation due to two primary reasons. Firstly, it entails matrices (as minimization variables) of size ($\mathbb R^{N\times 2^{n-1}}$) that grows exponentially with the TGV order ($n$). Secondly, it imposes a constraint that $\mathbf{p}_i \in \mathcal{R}(\Pi^{(i)})$, further adding complexity to the problem. To address these challenges, we present a more concise expression for TGV:\begin{restatable}{theorem}{tgvd}(Compact tensor-free representation of TGV)\label{ch4:th:rep}
	Let $\bn g $ be an image having N pixels. Then, the expression given in the \cref{ch4:th:tgvdir} can also be written as::
	\begin{equation}\label{eq:tgv_compact}
	TGV^n(\mathbf g)=\inf_{\mathbf p_0=\mathbf g,\mathbf p_n=\mathbf 0,\mathbf p_i\in \mathbb R^{N\times (i+1)}}\sum_{i=0}^{n-1}\alpha_{n-i-1}\| \mathcal A_i \mathcal D_i \mathbf p_i-\mathbf p_{i+1}\|_{1,2}.
	\end{equation}
	Here, $\alpha_0,...,\alpha_{n-1}$ are the regularization parameters and $\mathcal A_i$ and $\mathcal D_i$ are as given in the \cref{ch4:sec:notation}. 
\end{restatable}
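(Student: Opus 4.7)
The plan is to exhibit an explicit linear bijection between the feasible set of the direct representation in \cref{ch4:th:tgvdir} and the feasible set of the compact representation, under which each term in the sum is preserved; the two infima therefore coincide.

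First I would characterize $\mathcal R(\Pi^{(i)})$. Because $\Pi^{(i)}$ averages columns over the orbits of the coordinate-permutation action on $\{0,1\}^{i}$, a matrix $P\in\mathbb R^{N\times 2^{i}}$ lies in $\mathcal R(\Pi^{(i)})$ iff $P_{:,j}=P_{:,j'}$ whenever $b(j)$ and $b(j')$ share the same Hamming weight. Hence $P$ has only $i+1$ independent columns, one for each weight $\ell\in\{0,\dots,i\}$, and the weight-$\ell$ column is replicated exactly $\binom{i}{\ell}$ times among the $2^{i}$ direct-form columns. I would then define a linear bijection $\Phi_{i}:\mathbb R^{N\times(i+1)}\to\mathcal R(\Pi^{(i)})$ that places $\mathbf p_{i,:,\ell+1}/\sqrt{\binom{i}{\ell}}$ into every direct-form column whose binary index has weight $\ell$. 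The factor $1/\sqrt{\binom{i}{\ell}}$ absorbs the multiplicity so that the mixed norm is preserved pixelwise:
\[
\sum_{j=0}^{2^{i}-1}\bigl(\Phi_{i}(\mathbf p_{i})\bigr)_{r,j}^{\,2}=\sum_{\ell=0}^{i}\binom{i}{\ell}\cdot\frac{\mathbf p_{i,r,\ell+1}^{\,2}}{\binom{i}{\ell}}=\sum_{\ell=0}^{i}\mathbf p_{i,r,\ell+1}^{\,2},
\]
so that $\|\Phi_{i}(\mathbf p_{i})\|_{1,2}=\|\mathbf p_{i}\|_{1,2}$. The endpoint constraints $\mathbf p_{0}=\mathbf g$ and $\mathbf p_{n}=\mathbf 0$ transport through $\Phi_{0}$ and $\Phi_{n}$ unchanged because the weight-$0$ and weight-$i$ scalings are unity ($\binom{i}{0}=\binom{i}{i}=1$).

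The heart of the argument is then the commutation relation $\Pi^{(i+1)}\mathcal D_{2^{i}-1}\circ\Phi_{i}=\Phi_{i+1}\circ\mathcal A_{i}\mathcal D_{i}$, which I would establish columnwise. Fix a target $(i+1)$-bit index $\mathbf c$ of Hamming weight $k$. Since $\Phi_{i}(\mathbf p_{i})$ is already symmetric, averaging $\mathcal D_{2^{i}-1}\Phi_{i}(\mathbf p_{i})$ over the $(i+1)!$ permutations collapses to the weighted combination
\[
\bigl(\Pi^{(i+1)}\mathcal D_{2^{i}-1}\Phi_{i}(\mathbf p_{i})\bigr)_{:,\mathbf c}=\frac{i+1-k}{i+1}\,\mathbf D_{x}\!\left(\frac{\mathbf p_{i,:,k+1}}{\sqrt{\binom{i}{k}}}\right)+\frac{k}{i+1}\,\mathbf D_{y}\!\left(\frac{\mathbf p_{i,:,k}}{\sqrt{\binom{i}{k-1}}}\right),
\]
since $i+1-k$ permutations send a $0$-bit to the last coordinate (each contributing $\mathbf D_{x}$ on a residual of weight $k$) and $k$ permutations send a $1$-bit there (contributing $\mathbf D_{y}$ on a residual of weight $k-1$). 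To extract the compact value at weight $k$ I multiply by $\sqrt{\binom{i+1}{k}}$, and the Pascal identities $\binom{i+1}{k}(i+1-k)=(i+1)\binom{i}{k}$ and $\binom{i+1}{k}\,k=(i+1)\binom{i}{k-1}$ collapse the coefficients to $\sqrt{(i+1-k)/(i+1)}$ and $\sqrt{k/(i+1)}$---exactly the nonzero entries of the $(k+1)$-th column of $M_{i}$. The boundary cases $k=0$ and $k=i+1$ correspond to the two dedicated unit entries at the corners of $M_{i}$ and are checked by inspection.

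Combining these ingredients, each summand $\|\Pi^{(i+1)}\mathcal D_{2^{i}-1}\tilde{\mathbf p}_{i}-\tilde{\mathbf p}_{i+1}\|_{1,2}$ of the direct objective becomes $\|\Phi_{i+1}(\mathcal A_{i}\mathcal D_{i}\mathbf p_{i}-\mathbf p_{i+1})\|_{1,2}=\|\mathcal A_{i}\mathcal D_{i}\mathbf p_{i}-\mathbf p_{i+1}\|_{1,2}$ under the bijection, and the equivalence with \cref{ch4:th:tgvdir} follows. I expect the main obstacle to be the index bookkeeping in the commutation step, in particular aligning the $(\mathbf D_{x}\mathbf y_{m},\mathbf D_{y}\mathbf y_{m})$ row-pair structure inside $M_{i}$ with the $\mathbf D_{x},\mathbf D_{y}$ contributions that emerge from the symmetrization on the direct side, so that the two binomial ratios attach to the correct Cartesian derivatives.
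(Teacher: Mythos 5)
Your proof is correct, but it takes a different route from the paper's own proof of this theorem. The paper derives the compact form directly from the tensor-based definition of discrete TGV (\cref{def:discrete_tgv}): it expands $\epsilon_k$ in the orthonormal basis $\{\mathbf e_r^{(k)}\}$ of symmetric tensors (\cref{ch4:remark:eq}), computes $|||^{(k+1)}(\mathbf e_r^{(k)}\otimes\omega_0)$ and $|||^{(k+1)}(\mathbf e_r^{(k)}\otimes\omega_1)$ via \cref{cor:rel} to obtain $\phi_{k+1}\epsilon_k\phi_k^{-1}=\mathcal A_k\circ\mathcal D_k$ (\cref{prop:eps}), and then substitutes $\phi_i\bn u_i=\bn p_i$. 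You instead take \cref{ch4:th:tgvdir} as the starting point and build an explicit norm-preserving bijection $\Phi_i$ between $\mathbb R^{N\times(i+1)}$ and $\mathcal R(\Pi^{(i)})$ (replicating the weight-$\ell$ column $\binom{i}{\ell}$ times with a $1/\sqrt{\binom{i}{\ell}}$ scaling), then verify the intertwining $\Pi^{(i+1)}\mathcal D_{2^i-1}\circ\Phi_i=\Phi_{i+1}\circ\mathcal A_i\mathcal D_i$ by counting, for each target index of weight $k$, the $\binom{i}{k}$ even and $\binom{i}{k-1}$ odd source columns and collapsing the ratios via Pascal's identity. This is precisely the "alternative approach" the paper flags in its organization section and carries out in \cref{ch4:remark:dirrep} and \cref{ch4:lemma:eq_cost} (your $\Phi_i$ is the paper's substitution $\bn u_j=\bn p_{s(b(j))}/\sqrt{\Mycomb[k]{s(b(j))}}$); your counting of even/odd columns and the resulting coefficients $\sqrt{\Mycomb[i]{k}}/\sqrt{\Mycomb[i+1]{k}}$ and $\sqrt{\Mycomb[i]{k-1}}/\sqrt{\Mycomb[i+1]{k}}$ check out against $M_i$. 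What your route buys is that it never touches tensors or the symmetrization projection, at the cost of logically depending on \cref{ch4:th:tgvdir}; the paper's route is self-contained from the original definition and makes the orthonormal-basis origin of $M_k$ transparent. The only cosmetic slip is your justification of the endpoint constraints: for $\mathbf p_0=\mathbf g$ the relevant fact is that $\Phi_0$ is the identity on the single weight-$0$ column, and for $\mathbf p_n=\mathbf 0$ any scaling of zero is zero; the phrase about "weight-$i$ scalings" being unity is not needed and not quite what is used.
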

For proof,  see \cref{ch4:sec:proof_rep}. 
Note that both problems encountered in the direct formulation are addressed: (1) the issue of optimization variables ($\mathbf{p}_i$'s) growing in size exponentially has been solved. More precisely, now $\bn p_i$ are of the size $\mathbb R^{N\times (i+1)}$ instead of  $\mathbb R^{N\times 2^{i-1}}$, and (2) the constraint on the variables $\mathbf{p}_i$'s is now absent from the formulation. These challenges are eliminated through the reformulation provided in the theorem above.

Furthermore, it's important to highlight that readers aiming to understand and implement this formulation do not need any additional background knowledge on tensors. This significantly enhances the accessibility of the method.
\begin{restatable}{remark}{eqtgv}\label{ch4:remark:dirrep}(Direct and Compact expressions of TGV are  equivalent) It can be noted that proof of the compact form is derived from the original TGV definition. But, one can arrive the compact form starting from the direct form with the help of the following statement.
	For any $\bn g \in \mathbb R^N$, $\inf\Set{{\sum_{i=0}^{n-1}}\alpha_{n-i-1}\|\Pi^{(i+1)}\mathcal D_{2^i-1}\bn u_i-\bn u_{(i+1)}\|_{1,2}}{\bn u_n=\bn 0,\bn u_0=\bn g ,\bn u_i\in \mathcal R(\Pi^{(i)}) \text{ for } i=1,...,n-1}= \inf\Set{{\sum_{i=0}^{n-1}}\alpha_{n-i-1}\|\mathcal A_i\mathcal D_{i}\bn p_i-\bn p_{i+1}\|_{1,2}}{\bn p_n=\bn 0,\bn p_0=\bn g ,\bn p_i\in \mathbb R^{N\times{(i+1)}}\text{ for }i =1,...,n-1}$ 
\end{restatable}
The proof of the above statement is deferred to the end.
As a result of the above remark one can use the compact representation of TGV for all purposes.

\section{ Proof of \cref{ch4:th:tgvdir} and  \cref{ch4:th:rep}}\label{ch4:sec:proof_rep}
\subsection{Preliminaries on tensors and tensor based TGV}
\begin{definition}(Tensors)\label{def:tensors}
	A function $\mathcal P: \underbrace{\mathbb R^2 \times \mathbb R^2 .....\times \mathbb R^2}_{k \ \ times} \rightarrow \mathbb R$ is called a k-tensor on $\mathbb R^2$ if it satisfies the following:
	\begin{enumerate}
		\item $\mathcal P (\mathbf{v}_1,\mathbf{v}_2,...,\alpha \mathbf{v}_i,...,\mathbf v_k)=\alpha \mathcal P (\mathbf{v}_1,\mathbf{v}_2,..., \mathbf{v}_i,...,\mathbf v_k)$ for any $\alpha \in \mathbb R$ and any $i \in \{1,2,...,k\}$. Here, $\mathbf v_i \in \mathbb R^2$ for any $i \in \{1,2,...,k\}$.
		\item $\mathcal P (\mathbf{v}_1,\mathbf{v}_2,...,\mathbf{v}_i+\mathbf w,...,\mathbf v_k)= \mathcal P (\mathbf{v}_1,\mathbf{v}_2,..., \mathbf{v}_i,...,\mathbf v_k)+\mathcal P (\mathbf{v}_1,\mathbf{v}_2,...,\mathbf w,...,\mathbf v_k)$ for any $\mathbf w \in \mathbb R^2$ and any $i \in \{1,2,...,k \}$.
	\end{enumerate}
\end{definition}

For example, the function $\mathcal F : \mathbb  R^2 \rightarrow \mathbb R$, such that $\mathcal F(\mathbf v)=\mathbf a^T \mathbf v$, where $\mathbf a$ is any fixed vector in $\mathbb R^2$, is a 1-tensor in $\mathbb R^2.$ Similarly, the  function $\mathcal G : \mathbb  R^2 \times \mathbb R^2 \rightarrow \mathbb R$, such that $\mathcal G(\mathbf v_1,\mathbf v_2)=\mathbf v_1^T \mathbf v_2$ is a 2-tensor in $\mathbb R^2.$ 
\begin{remark}(Space of $k-$Tensors-$\mathbb C^{(k)}$)
	
	It can be verified that the set of all k-tensors in $\mathbb  R^2$ constitutes a vector space.  We denote this vector space by $\mathbb C ^{(k)}.$\end{remark}
\begin{definition} (Tensor Product)
	Consider a k-tensor $\mathcal P$ and an l-tensor $\mathcal Q$. Then, the tensor product ($\otimes$) of $\mathcal P$ and $\mathcal Q$ is a k$+$l tensor which is  given as:$$(\mathcal P \otimes \mathcal Q)(\mathbf v_1,\mathbf v_2,.., \mathbf v_{k+l})=\mathcal P(\mathbf v_1,...,\mathbf v_k)\cdot\mathcal Q(\mathbf v_{k+1},...,\mathbf v_{k+l}).$$\end{definition}

\begin{definition}\textbf{(Permutation)} In this work, $\pi : \{1,2,3, \ldots, k\} \rightarrow \{1,2,3, \ldots, k\}$ is an (invertible) map known as a permutation. With this definition, we can define a map $f_{\pi} : \mathbb{R}^k \rightarrow \mathbb{R}^k$, such that for any $\mathbf{v} = (v_1, v_2, \ldots, v_k)^T \in \mathbb{R}^k$, $f_{\pi}(\mathbf{v}) \df (v_{\pi(1)}, v_{\pi(2)}, \ldots, v_{\pi(k)})^T$. For example, let $k=3$, and $\pi$ is defined such that: $\pi(1) = 2$, $\pi(2) = 3$, and $\pi(3) = 1$. Then, for any $\mathbf{v} = (v_1, v_2, v_3)^T$, we have $f_{\pi}(\mathbf{v}) = (v_{\pi(1)}, v_{\pi(2)}, v_{\pi(3)})^T = (v_2, v_3, v_1)^T$. We denote the set of all permutations of k-letters as $S_k.$
	
\end{definition}
\begin{definition}(Symmetric k-tensors) A k-tensor $\mathcal S$ is  symmetric if $\mathcal S(\mathbf v_1,\mathbf v_2,...,\mathbf v_k)=\mathcal S(\mathbf v_{\pi(1)},...,\mathbf v_{\pi(k)})$ for all $\pi \in S_k.$ Here, $\pi$ is any permutation of k letters. 
\end{definition}
It can be noted that the set of symmetric tensors is a sub-space of the space of k-tensors. We denote this sub-space by $\mathbb S^{(k)}.$ Since, the symmetric k-tensors forms a subspace a projection ($|||$) on the sub space of symmetric k-tensors can be defined.
\begin{definition}\label{def:sym}
	The projection $|||^{(k)}:\mathbb C^{(k)}\rightarrow \mathbb S^{(k)}$ is defined as: $$|||^{(k)}\mathcal P(\mathbf v_1,\mathbf v_2,...,\mathbf v_k)=\frac{1}{k!}\sum_{\pi\in S_k}\mathcal P(\mathbf v_{\pi(1)},...,\mathbf v_{\pi(k)}).$$\end{definition}
The above expression can be interpreted as an average of all $k!$ permutations.
We need the following theorem  regarding the basis of the space of k-tensors in $\mathbb R^2.$ 

\begin{littheorem} (Standard basis for the space of  $k-$tensors)
	Consider $\mathbf w_0=
	\begin{pmatrix}
	1 \\
	0
	\end{pmatrix}
	$ and $\mathbf w_1=
	\begin{pmatrix}
	0 \\
	1
	\end{pmatrix}
	.$ $\{\mathbf w_0,\mathbf w_1\}$ is basis of $\mathbb R^2$ and let $\mathbf\omega_0=\mathbf w_0^T$ and $\mathbf\omega_1=\mathbf w_1^T$ be the  corresponding dual basis. Here, $\omega_0$ and $\omega_1$ are linear operators on $\mathbb R^2.$ For example, if $\bn v=(v_1 \ v_2)^T$, then $\omega_0(\bn v)=\bn w_0^T\bn v=v_1$ and $\omega_1(\bn v)=\bn w_1^T\bn v=v_2.$Then, $\{\mathbf \omega_{i_1}\otimes\mathbf \omega_{i_2}\otimes...\otimes\mathbf \omega_{i_k}| \ \ i_p\in \{0,1\} \text{ for all  } p  \ \in \{1,2,...k\}\}$ is the basis of $\mathbb C^{(k)}$.
\end{littheorem}
As for example, it can be verified that \begin{align*}
\mathcal G(\mathbf v,\mathbf w)&=\bn v^T\bn w\\&= v_1 w_1+ v_2w_2\\&=(\omega_0(\bn v))(\omega_0 (\bn w))+(\omega_1(\bn v))(\omega_1 (\bn w))\\&=(\omega_0\otimes \omega_0+ \omega_1\otimes\omega_1)(\mathbf v,\mathbf w).\end{align*} 
\begin{remark}
	With this basis we can represent any k-tensor, $\mathcal P$ in $\mathbb C^{(k)}$ as a summation of the all possible $2^k$ basis vectors as $\mathcal P=\sum_{\bn i \in \{0,1\}^k}p_{\bn i}\pmb \omega_{\bn i}.$. Here, $\bn i=(i_1,i_2,...,i_k)$ is the vector index lying in the set $\{0,1\}^k,$ therefore each $i_j$ takes the value either $0$ or $1.$ Hence, there are $2^k$ coefficients ( namely $p_{\bn i}s$) corresponding to all $\bn i$ that are in the set $\{0,1\}^k.$ Here, $\pmb \omega_{\bn i}$ is defined as $\pmb \omega_{\bn i} \df \omega_{i_1}\otimes...\otimes \omega_{i_k}.$ 
\end{remark}
\begin{remark} Let $\bn i$ be any vector in $\{0,1\}^k,$ then the sum, $s(\bn i)$ is defined as $s(\bn i)\df \sum_{j=1}^k i_j$.
	\begin{littheorem}\label{cor:e} (Orthonormal basis for the space of symmetric tensors) An orthonormal basis for the subspace $\mathbb S^{(k)}$ can be 
		given by the set $\{\mathbf e_0^{(k)},\mathbf e_1^{(k)},...,\mathbf e_k^{(k)}\}$ \cite{tgv} where $$\mathbf e_i^{(k)}=\frac{1}{\sqrt{\Mycomb[k]{i}}}\sum_{\mathbf j \in \{0,1\}^{k},s(\bn j)=i} \pmb \omega_{\bn j} .$$
	\end{littheorem}
	
\end{remark}
\begin{remark}
	It can be deduced from the above theorem that the dimension of $\mathbb C^{(k)}$ is $2^k$. Since, any finite dimensional vector space is isomorphic to the euclidean space (of same dimension), we can conclude that $\mathbb C^{(k)}$ is  isomorphic to $\mathbb R^{2^k}$. Explicitly,  there is an isomorphism $\psi: \mathbb C^{(k)}\rightarrow \mathbb R^{2^k}$, such that for any $\mathcal P=\sum_{\bn i\in \{0,1\}^k}p_{\bn i}\pmb \omega_{\bn i}\in \mathbb C^{(k)}$,  $ \psi(\mathcal P)\df (p_{\bn i^{(1)}},p_{\bn i^{(2)}},...,p_{\bn i^{(2^k)}})\in \mathbb R^{2^k}$. Here, $\bn i^{(1)},...,\bn i^{(2^k)}$ are the elements of $\{0,1\}^k$ arranged in increasing order, i.e, $b(\bn i^{(1)})<b(\bn i^{(2)})<...<b(\bn i^{(2^k)})$.
\end{remark}
\begin{remark} (Formula Projection of a tensor given in standard basis on $\mathbb S^{(k)}$) In \cref{def:sym}  we gave the definition of the projection on the space of symmetric tensors, now we give a formula for the coefficient of the projected tensor (given in standard basis) corresponding to the basis vectors $\{\pmb \omega_{\bn  j}\}_{\bn j \in \{0,1\}^k}$. Consider an element in $\mathbb C^{(k)}$ that is represented in standard basis as $\mathcal Q=\sum_{\bn j\in \{0,1\}^k}q_{\bn j}\pmb \omega_{\bn j} \in \mathbb C^{(k)},$  then $(|||^{(k)}\mathcal Q)=\sum_{\bn j\in \{0,1\}^k}\big(\frac{1}{k!}\sum_{\pi \in S_k} q_{f_{\pi}(\bn j)}\big)\pmb \omega_{\bn j}$. Recall that $f_{\pi}$ permutes the elements of the vector $\bn j$ according to the permutation  map $\pi.$
	
\end{remark}
We also need the definition of norm in $\mathbb C^{(k)}.$ 
\begin{definition}(inner product)Recall that $\bn w_0=(1,0)^T$ and $\bn w_1=(0,1)^T$. 
	Consider two tensors $\eta$ and $\zeta$ in $\mathbb C^{(k)}$, then the inner $\langle \eta,\zeta\rangle$ product is given  by:
	$$\langle \eta,\zeta \rangle=\sum_{\mathbf i\in\{0,1\}^{k}}\eta(\mathbf w_{i_1},\mathbf w_{i_2},...,\mathbf {w_{i_k}}) \zeta(\mathbf w_{i_1},\mathbf w_{i_2},...,\mathbf {w_{i_k}}).$$ Consequently, we can define the norm of any k-tensor $\eta \in \mathbb C ^{(k)}$ as $$\|\eta\|=\sqrt{\langle \eta,\eta \rangle}.$$
\end{definition}
\begin{remark}(Norm of a tensor)
	The norm ($\|\cdot\|$) of any tensor $\alpha=\sum_{\mathbf i \in \{0,1\}^{k}}\alpha_{\bn i}\pmb \omega _{\bn i}$ can also be given as:
	$\|\mathbf \alpha\|=(\sum_{\mathbf i \in \{0,1\}^{k}}\alpha_{\bn i}^2)^{1/2}.$ To obtain this relation, one can use the identity that $\omega_{i_1}\otimes....\otimes\omega_{i_k}(\mathbf w_{j_1},\mathbf w_{j_2},...,\mathbf{w}_{j_k})=1$ if $ i_l= j_l$ for all $l\in \{1,2,..,k\}$ and  zero otherwise.
\end{remark}It can be verified that for any $\beta=\sum_{i=0}^{k}\beta_i \mathbf e_i,$ $\|\beta\|=(\sum_{i=0}^k\beta_i^2)^{1/2}.$ 
In the context of derivative based regularization, we need the definition of symmetric tensors (representing derivatives) defined at any point in the space.  For this, we define the concept of tensor fields. Since, we are dealing with images, we will call them as tensor images as they define a tensor at each location in the 2D space.
\begin{definition}{(Continuous Tensor Images (fields))}
	A continuous k-tensor image (field) $ \alpha: \mathbb R^2 \rightarrow \mathbb \mathbb C^{(k)}$ assigns a tensor at each point of the two dimensional space. For e.g. a function $f:\mathbb R^2\rightarrow \mathbb R$ is a 0-dimensional tensor field. Similarly, instead of the complete 2-D space, we can define tensor fields confined to an open subset $\Omega \subset \mathbb R^2$ as follows:\\
	A continuous k-tensor image (field) $ \alpha: \Omega \rightarrow \mathbb \mathbb C^{(k)}$ assigns a tensor at each point of $\Omega$. In order to define the discrete total generalized variation,   it is required that a tensor is defined at each pixel location. Analogously, we can also define symmetric k-tensor fields. We denote the set of continuous symmetric k-tensor fields on $\Omega$ as $\mathbb S_{\Omega}^{(k)}.$
\end{definition}
\begin{definition}(Discrete Tensor Images)  A discrete k-tensor image (of $N$ pixels) $\mathcal \alpha: \{1,2,...,N\} \rightarrow \mathbb \mathbb C^{(k)}$ assigns each pixel with a k-tensor. Similarly, we can define discrete symmetric tensor images. We denote the set of discrete symmetric k-tensor fields on $N$  ordered pixel locations as $\mathbb S_{N}^{(k)}.$

\end{definition}
\begin{definition} A discrete symmetric k-tensor image (of $N$ pixels) $\mathcal \alpha: \{1,2,...,N\} \rightarrow \mathbb \mathbb S^{(k)}$ assigns each pixel with a \textbf{symmetric} k-tensor. We denote the set of discrete symmetric k-tensor fields on $N$  ordered pixel locations as $\mathbb S_{N}^{(k)}.$

\end{definition}
\begin{definition}(Symmetric derivative ($\mathcal E^{(k)}$))
	As TGV involves iterated derivative, we need to define the  symmetric derivative $\mathcal E ^{(k)}:\mathbb S^{(k)}_{\Omega}\rightarrow \mathbb S^{(k+1)}_{\Omega}.$ Consider any pixel location $(p,q)\text {in the set } \Omega.$  For this, let $\pmb \eta\in \mathbb S_{\Omega}^{(k)}$ such that $ \pmb \eta(p,q)=\sum_{\bn (p,q) \in \Omega}\eta_{\bn i}(p,q)\pmb \omega_{\bn i}$. The symmetric iterated derivative in continuous domain is defined as: 
	$$(\mathcal E^{(k)}(\pmb \eta))(p,q)=|||^{(k+1)}\Big[\sum_{\bn i \in \{0,1\}^k}\Big(({\frac{  \partial\eta_{\bn i}}{\partial x} })(p,q) \pmb \omega_{\bn i}\otimes \omega_0+({\frac{  \partial\eta_{\bn i}}{\partial y} })(p,q)\pmb \omega_{\bn i}\otimes \omega_1\Big)\Big].$$
	In the above equation, $\eta_{\bn i}:\mathbb R^2\rightarrow \mathbb R$ are real valued functions. Therefore, the partial derivatives $\frac {\partial}{\partial x}$ and $\frac {\partial}{\partial y}$ are clearly defined.
\end{definition}
\begin{definition}\label{ch4:defn:epsilon}(Symmetric derivative in discrete form $(\epsilon_k)$)
	In order to define the symmetrized derivative in   discrete form,  we first consider discrete tensor image (of $N$ pixels) $\pmb \eta\in \mathbb S_N^{(k)}$ such that $ \pmb \eta(j)=\sum_{\bn i \in \{0,1\}^k}\eta_{\bn i}(j)\pmb \omega_{i}$ for $j\in \{1,2,..N\}$, we replace $\partial /\partial x$ with $\bn D_x$, and similarly $\partial/\partial y$ with $\bn D_y$. As in continous setting, $\eta_{\bn i}:\{1,2,...,N\}\rightarrow \mathbb R$ are real valued functions (denoting grayscale images.) Therefore, $\bn D_x$ and $\bn D_y$ are clearly defined, and $\bn D_x\eta_{\bn i}$ and $\bn D_y\eta_{\bn i}$ are real valued derivative images. 
	\begin{align}
	&( \epsilon_{k}(\pmb \eta))(j)\df|||^{(k+1)}\Big[\sum_{\bn i \in \{0,1\}^k}\Big(({\bn D_x\eta_{\bn i}} )(j)\pmb \omega_{\bn i}\otimes \omega_0+({{  \bn D_y\eta_{\bn i}} } ) (j)\pmb \omega_{\bn i}\otimes  \omega_1\Big)\Big].
	\end{align}
\end{definition}
In the above definition, $\epsilon_k$ was defined on the standard basis. We extend the above definition to be defined on  the orthongonal basis for the symmetric tensors.
\begin{lemma}\label{ch4:remark:eq}
	Consider any $\pmb \beta\in \mathbb  S_N^{(k)}$ given as $\pmb \beta(i)=\sum_{r=0}^k \beta_r (i)\bn e_r^{(k)}$ for any pixel location $i$. The operation of symmetric derivative on $\beta$ can be written as:
	\begin{equation}
	\mathbb (\epsilon_k(\pmb \beta))(i)=|||^{(k+1)}\Big[\sum_{r=0}^{(k)}\Big(({\mathbf D_x  \beta_r })(i)\mathbf e_{r}^{(k)}\otimes \omega_0+({\mathbf D_y  \beta_r })(i)\mathbf e_{r}^{(k)}\otimes  \omega_1\Big)\Big].\end{equation} 
	
\end{lemma}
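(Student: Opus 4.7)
My plan is to exploit the linearity of $\epsilon_k$ (inherited from the linearity of the partial derivatives $\bn D_x$, $\bn D_y$, the tensor product in the second slot, and the symmetrizing projection $|||^{(k+1)}$) and then simply change basis from $\{\bn e_r^{(k)}\}$ back to the standard basis $\{\pmb\omega_{\bn i}\}$, apply the already-proven definition, and then re-collect terms. The lemma is essentially a coordinate-change identity, and the only thing that needs checking is that the $\sqrt{\binom{k}{r}}$ normalization factors cancel correctly.

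First I would write out the expansion of each orthonormal basis tensor in terms of the standard basis using the Literature Theorem \ref{cor:e}, namely
\begin{equation*}
\bn e_r^{(k)} = \frac{1}{\sqrt{\Mycomb[k]{r}}}\sum_{\bn j\in\{0,1\}^k,\, s(\bn j)=r}\pmb\omega_{\bn j},
\end{equation*}
and substitute this into $\pmb\beta(i)=\sum_{r=0}^k\beta_r(i)\bn e_r^{(k)}$ to obtain the standard-basis representation $\pmb\beta(i)=\sum_{\bn j\in\{0,1\}^k}\eta_{\bn j}(i)\pmb\omega_{\bn j}$ with $\eta_{\bn j}(i)=\beta_{s(\bn j)}(i)/\sqrt{\Mycomb[k]{s(\bn j)}}$. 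At this point, Definition \ref{ch4:defn:epsilon} applies directly and yields
\begin{equation*}
(\epsilon_k(\pmb\beta))(i)=|||^{(k+1)}\Bigl[\sum_{\bn j\in\{0,1\}^k}\frac{1}{\sqrt{\Mycomb[k]{s(\bn j)}}}\bigl((\bn D_x\beta_{s(\bn j)})(i)\pmb\omega_{\bn j}\otimes\omega_0+(\bn D_y\beta_{s(\bn j)})(i)\pmb\omega_{\bn j}\otimes\omega_1\bigr)\Bigr],
\end{equation*}
where I have used linearity of $\bn D_x$ and $\bn D_y$ to pull out the constant $1/\sqrt{\Mycomb[k]{s(\bn j)}}$.

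Next I would partition the sum over $\bn j\in\{0,1\}^k$ according to the value of $s(\bn j)\in\{0,1,\ldots,k\}$. For each fixed $r$, the inner sum $\sum_{\bn j:\, s(\bn j)=r}\pmb\omega_{\bn j}\otimes\omega_l$ equals $\sqrt{\Mycomb[k]{r}}\,\bn e_r^{(k)}\otimes\omega_l$ for $l\in\{0,1\}$, again by Literature Theorem \ref{cor:e} applied to the first $k$ slots (the outer slot is unaffected since $\omega_l$ is pulled out of the sum by bilinearity of $\otimes$). The factor $\sqrt{\Mycomb[k]{r}}$ then cancels exactly the $1/\sqrt{\Mycomb[k]{r}}$ inside the bracket, leaving
\begin{equation*}
(\epsilon_k(\pmb\beta))(i)=|||^{(k+1)}\Bigl[\sum_{r=0}^{k}\bigl((\bn D_x\beta_r)(i)\bn e_r^{(k)}\otimes\omega_0+(\bn D_y\beta_r)(i)\bn e_r^{(k)}\otimes\omega_1\bigr)\Bigr],
\end{equation*}
which is exactly the formula claimed.

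The main obstacle is purely bookkeeping: one must be careful that the regrouping of $\pmb\omega_{\bn j}$'s according to $s(\bn j)$ commutes with the outer tensoring by $\omega_l$ and with $|||^{(k+1)}$, and one must verify that the combinatorial factor $\sqrt{\Mycomb[k]{r}}$ appearing when converting back to $\bn e_r^{(k)}$ is precisely the reciprocal of the factor introduced in the first step. No deep structural argument is needed beyond linearity and the explicit expression for $\bn e_r^{(k)}$.
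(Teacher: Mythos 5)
Your proof is correct and is essentially the paper's own argument: both rest on partitioning $\{0,1\}^k$ by the value of $s(\bn j)$, the conversion factor $\sqrt{\Mycomb[k]{r}}$ between the standard and orthonormal coefficients, and linearity of $\bn D_x$, $\bn D_y$, $\otimes$, and $|||^{(k+1)}$. The only cosmetic difference is direction — you expand the given $\bn e_r^{(k)}$ representation into the standard basis and regroup, whereas the paper starts from a standard-basis expansion and uses symmetry of $\pmb\eta$ to collapse it onto the $\bn e_j^{(k)}$ — so no substantive comparison is needed.
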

\begin{proof}
	We prove the above result from the  \cref{ch4:defn:epsilon}. Consider any symmetric tensor image $\pmb \eta(r)=\sum_{\bn i\in\{0,1\}^N}\eta_{\bn i}(r)\pmb \omega_{\bn i}\in \mathbb S_N^{(k)}.$ Since this tensor image is symmetric we can write $\pmb \eta (r)$  as a linear combination of the orthogonal basis vectors of $\mathbb S^{(k)}$ (the space of symmetric tensors) as $\pmb \eta(r)=\sum_{j=0}\eta'_j(r)\bn e^{(k)}_j$ for each pixel location $r$. Now we can divide $\{0,1\}^k$ into disjoint sets $T_j$ where  $T_j\df \Set{\bn i\in \{0,1\}^k}{s(\bn i)=j}.$ Also, it can be observed that $\{0,1\}^k=\cup_{j=0}^k T_j.$ With this the symmetric tensor $\pmb \eta=\sum_{r=0}^k\sum_{\bn i \in T_j} \eta_{\bn i}\pmb \omega_{\bn i}.$ As $\pmb \eta$ is symmetric,  $\eta_i$ remains same over the set $T_r$ for any $r$, i.e. if $\bn i^{(1)}$ and $\bn i^{(2)}$ both belong to $T_r$ then $\eta_{\bn i^{(1)}}=\eta_{\bn i^{(2)}}.$ With this we can define, $\eta_j=\eta_{\bn i}$ for each $\bn i\in T_j.$ Now, we can write $\pmb \eta(r)=\sum_{r=0}^k\eta_j(r)\sum_{\bn i \in T_r} \pmb \omega_{\bn i}.$ Using the definition of $\bn e_j^{(k)}$ we can conclude that: $\eta_j(r)\sqrt{\Mycomb[k]{j}}=\eta_j'(r)$ for any r. Now, we invoke the definition of symmetric derivative:
	\begin{align}
	( \epsilon_{k}(\pmb \eta))(r)&=|||^{(k+1)}\Big[\sum_{\bn i \in \{0,1\}^k}\Big(({\bn D_x\eta_{\bn i}} )(r)\pmb \omega_{\bn i}\otimes \omega_0+({{  \bn D_y\eta_{\bn i}} } ) (r)\pmb \omega_{\bn i}\otimes  \omega_1\Big)\Big]\\&=|||^{(k+1)}\Big[\sum_{j=0}^k\sum_{\bn i \in T_j}\Big(({\bn D_x\eta_{\bn i}} )(r)\pmb \omega_{\bn i}\otimes \omega_0+({{  \bn D_y\eta_{\bn i}} } ) (r)\pmb \omega_{\bn i}\otimes  \omega_1\Big)\Big]\\&=|||^{(k+1)}\Big[\sum_{j=0}^k\Big(({\bn D_x\eta_{j}} )(r)\sum_{\bn i \in T_j}\pmb \omega_{\bn i}\otimes \omega_0+({{  \bn D_y\eta_{j}} } ) (r)\sum_{\bn i \in T_j}\pmb \omega_{\bn i}\otimes  \omega_1\Big)\Big]\\&=|||^{(k+1)}\Big[\sum_{j=0}^k\Big(({\bn D_x\eta'_{j}} )(r)\bn e_j^{(k)}\otimes \omega_0+({{  \bn D_y\eta'_{j}} } ) (r)\bn e_j^{(k)}\otimes  \omega_1\Big)\Big]
	\end{align}
\end{proof}
\begin{definition}\label{def:cont_tgv}The Total Generalized Variation (TGV) in continuous form as given by \cite{tgv} can be written as:
	\begin{equation}\label{eq:tgv_discrete}
	\mathcal {TGV}^n(\mathbf g)=\inf_{u_i\in \mathbb S^{(i)}_{ \Omega},u_0=(\mathbf g),u_{n}=0}\sum_{i=0}^{n-1}\alpha_{n-i-1}\|\mathcal E^{(i)}u_i-u_{i+1}\|.
	\end{equation}
\end{definition}
\begin{definition}\label{def:discrete_tgv}The above definition of Total Generalized Variation (TGV) can be discretized by replacing $\mathcal E^{(i)}$ with its discrete counterpart $\epsilon_i$ as:
	\begin{equation}\label{eq:tgv_orignal}
	{TGV}^n(\mathbf g)=\inf_{\bn u_i\in \mathbb S^{(i)}_{N},\bn u_0=(\mathbf g),\bn u_{n}=\bn 0}\sum_{i=0}^{n-1}\alpha_{n-i-1}\|\mathcal \epsilon_i \bn u_i-\bn u_{i+1}\|.
	\end{equation}
\end{definition}

\subsection{Some results on tensors used for deriving TGV formulations}
The following result relates the two basis defined in the previous section.
\begin{proposition}
	
	\label{cor:rel}(Relation between standard basis of $\mathbb C^{(k)}$ and orthonormal basis of $\mathbb S^{(k)}$)
	Consider $\alpha_j=\pmb \omega_{\bn p}=\omega_{p_1}\otimes...\otimes\omega_{p_k} \in \mathbb C^{(k)},$ such that $s(\bn p)=\sum_{l=1}^kp_l=j$ for some $j \in \{0,1,...,k\}$. Then, $ \ \ |||^{(k)}{(\alpha_j)}= \frac{1}{\sqrt{\Mycomb[k]{j}}}\ \ \mathbf{e}_j^{(k)}.$
\end{proposition}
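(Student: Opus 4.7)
The plan is to unfold the definition of the symmetrization projection $|||^{(k)}$ applied to the simple tensor $\alpha_j = \pmb\omega_{\bn p} = \omega_{p_1}\otimes\cdots\otimes\omega_{p_k}$ and then carry out a purely combinatorial count. From the definition of $|||^{(k)}$ and the multilinearity of the tensor product, I would first observe that
$$|||^{(k)}(\pmb\omega_{\bn p}) = \frac{1}{k!}\sum_{\pi\in S_k}\omega_{p_{\pi(1)}}\otimes\cdots\otimes\omega_{p_{\pi(k)}} = \frac{1}{k!}\sum_{\pi\in S_k}\pmb\omega_{f_\pi(\bn p)},$$
so the problem reduces to counting, for each $\bn q\in\{0,1\}^k$ with $s(\bn q)=j$, how many $\pi\in S_k$ satisfy $f_\pi(\bn p)=\bn q$.

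The key observation is that since $\bn p\in\{0,1\}^k$ has exactly $j$ ones and $k-j$ zeros, the orbit of $\bn p$ under the action $\pi\mapsto f_\pi(\bn p)$ consists of all binary vectors $\bn q\in\{0,1\}^k$ with $s(\bn q)=j$, and the stabilizer of $\bn p$ has order $j!(k-j)!$ (independently permuting the positions of the ones and of the zeros leaves $\bn p$ unchanged). Hence for every fixed $\bn q$ in the orbit, exactly $j!(k-j)!$ permutations send $\bn p$ to $\bn q$. The identity $\Mycomb[k]{j}\cdot j!(k-j)! = k!$ provides a sanity check.

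Plugging this count back in collapses the sum to
$$|||^{(k)}(\pmb\omega_{\bn p}) = \frac{j!(k-j)!}{k!}\sum_{\bn q\in\{0,1\}^k,\,s(\bn q)=j}\pmb\omega_{\bn q} = \frac{1}{\Mycomb[k]{j}}\sum_{\bn q\in\{0,1\}^k,\,s(\bn q)=j}\pmb\omega_{\bn q}.$$
The final step is simply to recognize, by the literature theorem stating the orthonormal basis of $\mathbb S^{(k)}$, that the remaining sum equals $\sqrt{\Mycomb[k]{j}}\,\mathbf e_j^{(k)}$, which yields the claim $|||^{(k)}(\alpha_j)=\mathbf e_j^{(k)}/\sqrt{\Mycomb[k]{j}}$.

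There is no real obstacle here; the statement is essentially a bookkeeping identity between symmetrization and the normalization built into $\mathbf e_j^{(k)}$. The only subtle point is handling the multiplicity of the orbit action correctly (i.e.\ not overcounting or undercounting the permutations fixing $\bn p$), which is why I would write the stabilizer argument explicitly rather than try to pass directly from the sum over $\pi$ to the sum over $\bn q$.
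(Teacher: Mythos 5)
Your proof is correct, but it takes a different route from the paper's. The paper argues indirectly: it applies $|||^{(k)}$ to the expansion $\mathbf e_j^{(k)}=\frac{1}{\sqrt{\Mycomb[k]{j}}}\sum_{s(\bn m)=j}\pmb\omega_{\bn m}$, observes that all $\Mycomb[k]{j}$ terms $|||^{(k)}\pmb\omega_{\bn m}$ in the sum are equal, and then uses idempotence of the projection ($|||^{(k)}\mathbf e_j^{(k)}=\mathbf e_j^{(k)}$ since $\mathbf e_j^{(k)}$ is already symmetric) to solve for $|||^{(k)}\pmb\omega_{\bn p}$ without ever counting permutations. You instead compute $|||^{(k)}\pmb\omega_{\bn p}$ head-on from \cref{def:sym}, reducing it to an orbit--stabilizer count: the orbit of $\bn p$ under coordinate permutation is all of $\{\bn q : s(\bn q)=j\}$ and the stabilizer has order $j!(k-j)!$, so each $\pmb\omega_{\bn q}$ appears with multiplicity $j!(k-j)!/k!=1/\Mycomb[k]{j}$. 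Both arguments are sound; the paper's is shorter because it offloads the combinatorics onto the projection property, while yours is more constructive and in fact yields as a byproduct the explicit standard-basis formula for $|||^{(k)}\pmb\omega_{\bn p}$ that the paper states separately in its remark on projecting a tensor given in the standard basis. One small point worth making explicit if you write this up: from the definition $|||^{(k)}\mathcal P(\mathbf v_1,\dots,\mathbf v_k)=\frac{1}{k!}\sum_{\pi}\mathcal P(\mathbf v_{\pi(1)},\dots,\mathbf v_{\pi(k)})$, reindexing the arguments of a simple tensor actually produces $\pmb\omega_{f_{\pi^{-1}}(\bn p)}$ rather than $\pmb\omega_{f_{\pi}(\bn p)}$; since $\pi\mapsto\pi^{-1}$ is a bijection of $S_k$ the sum is unchanged, but the intermediate identity deserves a word of justification.
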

Proof: By \cref{cor:e} we have,
$$\mathbf e_j^{(k)}=\frac{1}{\sqrt{\Mycomb[k]{j}}}\sum_{\mathbf m \in \{0,1\}^{k} , s(\bn m)=j} \pmb \omega_{\bn m} .$$
Applying the (linear) operator $|||^{(k)}$ on both sides we get,

$$|||^{(k)}\mathbf e_j^{(k)}=\frac{1}{\sqrt{\Mycomb[k]{j}}}\sum_{\mathbf m \in \{0,1\}^{k} ,s(\bn m)=j}|||^{(k)}\pmb \omega_{\bn m} .$$ Now, from \cref{def:sym} it can be seen that all elements inside the summation are equal. Therefore, we get,
$$|||^{(k)}\mathbf e_j^{(k)}=\frac{1}{\sqrt{\Mycomb[k]{j}}} {\Mycomb[k]{j}}|||^{(k)}\pmb \omega_{\bn m} .$$ Using the fact that $|||^{(k)}$ is the projection and $\mathbf e_{j}^{(k)}$ is a symmetric tensor gives the result.
\\The following proposition identifies tensor images with matrices which allows us to represent TGV in a tensor-free form.
\begin{proposition}
	\label{ch4:remark_iso}
	The set of discrete tensor images $S_N^{(k)}$ is a vector space (over $\mathbb R$) of dimension $N\times (k+1)$ and therefore, isomorphic to $\mathbb R^{N\times (k+1)}.$
\end{proposition}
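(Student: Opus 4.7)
The plan is to set up the vector space structure pointwise and then exhibit a concrete isomorphism using the orthonormal basis $\{\mathbf e_0^{(k)},\ldots,\mathbf e_k^{(k)}\}$ of $\mathbb S^{(k)}$ that was already produced in the cited Literature Theorem (\cref{cor:e}).

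First, I would observe that $\mathbb S^{(k)}_N$ is the set of maps $\pmb\alpha:\{1,\ldots,N\}\to\mathbb S^{(k)}$. Because the codomain $\mathbb S^{(k)}$ is itself a real vector space (it is a subspace of the tensor space $\mathbb C^{(k)}$, closed under the symmetrisation projection $|||^{(k)}$), the set of such maps inherits a vector space structure via pointwise addition and scalar multiplication, $(\pmb\alpha+\pmb\beta)(i)\df\pmb\alpha(i)+\pmb\beta(i)$ and $(c\pmb\alpha)(i)\df c\,\pmb\alpha(i)$. Verifying the vector space axioms is immediate once they are known to hold in $\mathbb S^{(k)}$.

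Next I would construct the isomorphism $\psi:\mathbb S^{(k)}_N\to\mathbb R^{N\times(k+1)}$ as follows. By \cref{cor:e}, every symmetric tensor $\pmb\alpha(i)\in\mathbb S^{(k)}$ has a unique expansion $\pmb\alpha(i)=\sum_{j=0}^{k}\alpha_j(i)\,\mathbf e_j^{(k)}$. Define $\psi(\pmb\alpha)$ to be the matrix whose $(i,j)$ entry is $\alpha_j(i)$, i.e.\ the $i$-th row lists the $k+1$ coordinates of the tensor at pixel $i$ in the orthonormal basis. Linearity of $\psi$ follows from the pointwise definition and the uniqueness of expansion in the basis. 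Injectivity follows because if $\psi(\pmb\alpha)=0$ then $\alpha_j(i)=0$ for every $i,j$, forcing $\pmb\alpha(i)=0$ at every pixel. Surjectivity is obtained by going the other direction: given any matrix $M\in\mathbb R^{N\times(k+1)}$, define $\pmb\alpha(i)\df\sum_{j=0}^{k}M_{i,j}\,\mathbf e_j^{(k)}\in\mathbb S^{(k)}$; then $\psi(\pmb\alpha)=M$.

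Finally, since $\psi$ is a linear bijection between the two spaces, $\mathbb S^{(k)}_N\cong\mathbb R^{N\times(k+1)}$, and in particular $\dim\mathbb S^{(k)}_N=N(k+1)$. There is no real obstacle here; the only thing to be careful about is that the orthonormal basis $\{\mathbf e_j^{(k)}\}_{j=0}^{k}$ is used, rather than the larger standard basis $\{\pmb\omega_{\bn i}\}_{\bn i\in\{0,1\}^k}$ of $\mathbb C^{(k)}$. It is this choice that pins down the second dimension as $k+1$ (matching the $(i+1)$ appearing in the compact TGV representation \cref{ch4:th:rep}) rather than $2^k$.
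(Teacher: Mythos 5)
Your proof is correct and takes essentially the same route as the paper: both arguments reduce to the fact that $\{\mathbf e_j^{(k)}\}_{j=0}^{k}$ is a basis of $\mathbb S^{(k)}$, the paper by exhibiting the resulting basis $\{f_{i,j}\}$ of $\mathbb S_N^{(k)}$ (where $f_{i,j}$ places $\mathbf e_j^{(k)}$ at pixel $i$ and $0$ elsewhere) and counting it, you by writing down the associated coordinate map directly. The isomorphism $\psi$ you construct is precisely the map $\phi_k$ that the paper introduces in the remark immediately following the proposition, so nothing is missing.
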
 
\begin{proof}To see that $\mathbb S_N^{(k)}$ is a vector space, one can verify that the linear combination of any two elements of  $\mathbb S_N^{(k)}$ is in $\mathbb S_N^{(k)}$. \\
	To show that $\mathbb S_N^{(k)}$ is isomorphic to $\mathbb R^{N\times(k+1)}$, we show that the dimension of $\mathbb S_N^{(k)}$ is $N\times(k+1).$To this end we prove that the basis of $\mathbb S_N^{(k)}$ is the set of tensor images $\{f_{i,j}:\{1,2,...,N\}\rightarrow \mathbb S^{(k)}|i\in\{1,2,...,N\},j\in \{0,1,...,k\}\}$ (of $N$ pixels) defined for each pixel $r$ as:
	$f_{i,j}(r)=\bn e^{(j)}_k$ if $r=i$ and $f_{i,j}(r)=0$ if $r\neq i.$ To show that the set spans, consider any $\pmb \beta\in \mathbb S_{N}^{(k)}$ given as $\pmb \beta(r)=\sum_{j=0}^k\beta_j(r)\bn e^{(k)}_j$ in the orthogonal basis  for the space of symmetric k-tensors. Now, for any $r$, $$\pmb \beta(r)=\sum_{j=0}^k\beta_j(r)\bn e^{(k)}_j.$$ By definition of $f_{i,j}s$ we can write:
	\begin{align}
	\pmb \beta(r)=&\sum_{j=0}^k\beta_j(r)f_{r,j}(r)\\=&\big[\sum_{i=1}^N\sum_{j=0}^k\beta_j(i)f_{i,j}\big](r).
	\end{align}
	Therefore, any element in $\mathbb S_N^{(k)}$ can be written as a linear combination of $f_{i,j}$s. To show that they are linearly independent we consider the linear combination $\sum_{i=1}^N\sum_{j=0}^{k}a_{i,j}f_{i,j}=\bn 0.$ Choose any $r \in \{1,2,...,N\}.$ For this $r$ we have: $ \sum_{i=1}^N\sum_{j=0}^{k}a_{i,j}f_{i,j}(r)=\bn 0$. By the definition of $f_{i,j},$ $\sum_{j=0}^{k}a_{r,j}f_{r,j}(r)=\bn 0.$ This means,$\sum_{j=0}^{k}a_{r,j}\bn e^{(k)}_j=\bn 0$. Since, $\bn e^{(k)}_j$s are linearly independent, $a_{r,j}=0$ for $j=0,1,..,k.$ As $r$ was arbitrarily chosen, $a_{r,j}=0$ for all  $r\in \{1,2,...N\}$ and $j\in \{ 0,...,k\}.$
\end{proof}

\begin{remark}(Isomorphism between tensor images and matrices)\label{ch4:remark:isom}
	As a result of the above theorem, any discrete symmetric tensor field $\pmb \alpha \in \mathbb S^{(k)}_N$ can be represented by a matrix of size $N\times(k+1).$ Now, we explicitly define the isomorphism between the two vector spaces. Consider any symmetric tensor field $\pmb \alpha$, defined for any pixel index $r\in \{1,2,...,N\}$ as $\pmb \alpha(r)= \sum_{j=0}^k\alpha_j(r)(\bn e_j^{(k)})$ .Here, $\alpha_j(r)$ denotes the  coefficient for the basis $\bn e_j^{(k)}.$   The isomorphism 
	$\phi_k: \mathbb S^{(k)}_N\rightarrow {\mathbb R^{N\times(k+1)}}$ is given as: $\phi_k(\pmb \alpha)=\begin{bmatrix}\alpha_0(1)&...&\alpha_k (1)\\
	\alpha_0(2)&...&\alpha_k (2)\\
	\vdots &\vdots&\vdots&\\
	\alpha_0(N)&...&\alpha_k(N)\end{bmatrix}.$
\end{remark}
\begin{remark}
	The set of tensor images $\mathbb C_N^{(k)}$ is a vector space (over $\mathbb R$) of dimension ${N\times 2^k}.$ Therefore, it is isomorphic to $\mathbb R^{N\times 2^k}$, and  we denote the isomorphism as $\psi_k:\mathbb C^{(k)}_N\rightarrow \mathbb R^{N\times 2^k}.$
\end{remark}
The proof of the above remark is similar to the proof of \cref{ch4:remark:isom}. Hence, we skip the proof here.
\begin{definition}
	The mixed norm ($\|\cdot\|_{1,2}$) for any $\mathbf M \in \mathbb R^{N\times p} $ is defined as:$$\|\mathbf M\|_{1,2}=\sum_{j=1}^{N}(\sum_{i=1}^{p}\mathbf M_{j,i}^2)^{1/2}.$$
\end{definition}
\begin{remark}\label{ch4:lemma:norm}
	For any $\pmb \alpha \in \mathbb S^{(k)}_n, \|\alpha\|\df\sum_{i=1}^n\|\alpha(i)\|$. With this definition we get,$\|\pmb \alpha\|=\|\phi_k(\pmb \alpha)\|_{1,2}.$
\end{remark}
\begin{proof}
	To prove this we need to show that for any $\beta=\sum_{r=0}^{k}\beta_r\bn e^{(k)}_r\in \mathbb S^{(k)},\ \ \|\mathbb \beta\|=\big(\sum_{r=0}^{k}\beta_r^2\big)^{\frac{1}{2}}.$ This follows from the fact that $\langle \bn e^{(k)}_j,\bn e^{(k)}_l\rangle=1$ if $j=l$ and $0$ else.
\end{proof}
\subsection{Organization of Proofs}

To establish the results regarding the representation of Total Generalized Variation (TGV), we start by considering the discrete definition of TGV (see \cref{def:discrete_tgv}), which corresponds to a discretized rendition of the definition presented in \cite{tgv}. Through the utilization of \cref{prop:direct} (that gives the iterated gradient in its matrix form), we substantiate the theorem that presents the direct representation of TGV (refer to \cref{ch4:th:tgvdir}). Furthermore, starting with the same definition, we harness the insights provided by \cref{ch4:remark:eq} to facilitate the expression from \cref{def:discrete_tgv} in terms of the basis of symmetric tensors. This effort culminates in proving the theorem that formulates the compact representation (see \cref{ch4:th:rep}). Importantly, \cref{ch4:remark:dirrep} independently demonstrates the equivalence of both of these forms. Hence, an alternative approach to establishing the compact form involves proving the direct form and using \cref{ch4:remark:dirrep}.\subsection{Proofs of the theorem for representation of Total Generalized Variation}
As we need to give a tensor free representation, we give a matrix equivalent of $\epsilon_k$ (symmetric derivative operator) with the help of the following proposition. 
\begin{proposition} \label{prop:direct}For any tensor image $\pmb \eta \in \mathbb S_N^{(k)},$ we have $\psi_{k+1} \epsilon_k\pmb\eta=\Pi^{(k+1)}\mathcal D_{2^k-1}\psi_k(\pmb\eta).$
\end{proposition}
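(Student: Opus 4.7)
The plan is to expand $\pmb\eta$ in the standard basis $\{\pmb\omega_{\bn i}\}_{\bn i \in \{0,1\}^k}$ and then compute both sides of the identity column by column under the isomorphisms $\psi_k$ and $\psi_{k+1}$. Write $\pmb\eta(r) = \sum_{\bn i \in \{0,1\}^k} \eta_{\bn i}(r)\,\pmb\omega_{\bn i}$, so that the column of $\psi_k(\pmb\eta)$ sitting at position $b(\bn i)$ is exactly the grayscale image $\eta_{\bn i}$.

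By \cref{ch4:defn:epsilon}, we have $\epsilon_k\pmb\eta = |||^{(k+1)} T$ where
\[T(r) \df \sum_{\bn i \in \{0,1\}^k} \bigl[(\bn D_x \eta_{\bn i})(r)\,\pmb\omega_{\bn i}\otimes\omega_0 + (\bn D_y \eta_{\bn i})(r)\,\pmb\omega_{\bn i}\otimes\omega_1\bigr] \in \mathbb C_N^{(k+1)},\]
which in the standard basis of $\mathbb C^{(k+1)}$ is a sum over the basis vectors $\pmb\omega_{(\bn i,0)}$ and $\pmb\omega_{(\bn i,1)}$, where $(\bn i, c)$ denotes the $(k+1)$-vector obtained by appending $c \in \{0,1\}$ at the end. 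The indexing convention in the preliminaries makes the first coordinate of $b$ the most significant bit (verified by the example $b^{-1}([0,1,1]^T) = 3$), so appending at the end makes the new bit the least significant, giving $b((\bn i,0)) = 2b(\bn i)$ and $b((\bn i,1)) = 2b(\bn i)+1$. The first key step is then to identify $\psi_{k+1}(T)$ with $\mathcal D_{2^k-1}\psi_k(\pmb\eta)$ column by column: at position $2b(\bn i)$ both matrices hold the column $\bn D_x\eta_{\bn i}$, and at position $2b(\bn i)+1$ both hold $\bn D_y\eta_{\bn i}$, which is precisely the pairing prescribed by the definition of $\mathcal D_{2^k-1}$.

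The second step, handling the symmetrization, is to show that for any $Q \in \mathbb C_N^{(k+1)}$ one has $\psi_{k+1}(|||^{(k+1)} Q) = \Pi^{(k+1)}\psi_{k+1}(Q)$. Using the standard-basis formula recorded in the preliminaries, the coefficient of $\pmb\omega_{\bn j}$ in $|||^{(k+1)} Q$ is $\frac{1}{(k+1)!}\sum_{\pi \in S_{k+1}} Q_{f_\pi(\bn j)}$. Translated through $\psi_{k+1}$, column $b(\bn j)$ of the resulting matrix is the average over $\pi$ of the columns of $\psi_{k+1}(Q)$ at positions $b^{-1}f_\pi(b(\bn j))$, which is exactly the definition of $\Pi^{(k+1)}$ applied via the symmetric index vector $\bn t^{b(\bn j),\,k+1}$. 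Applying this identity to $Q = T$ and composing with the first step yields the claim.

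The main obstacle is the careful index bookkeeping in the first step: one must verify that the tensor-product ordering $\pmb\omega_{\bn i}\otimes\omega_0$, $\pmb\omega_{\bn i}\otimes\omega_1$ precisely matches the alternating $\bn D_x, \bn D_y$ pattern in the definition of $\mathcal D_{2^k-1}$ under the increasing-$b$ ordering used by $\psi_{k+1}$. Once this correspondence is pinned down, the symmetrization step is essentially tautological, since both $|||^{(k+1)}$ on the tensor side and $\Pi^{(k+1)}$ on the matrix side are defined as the same average over $S_{k+1}$ applied to the indexing variable.
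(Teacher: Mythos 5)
Your proposal is correct and follows essentially the same route as the paper's proof: both expand $\pmb\eta$ in the standard basis, identify the unsymmetrized derivative tensor with $\mathcal D_{2^k-1}\psi_k(\pmb\eta)$ via the even/odd (append-a-least-significant-bit) index correspondence, and then match the $S_{k+1}$-average defining $|||^{(k+1)}$ with the one defining $\Pi^{(k+1)}$. Your factorization into two explicit steps is a slightly cleaner organization of the same entrywise computation the paper carries out in one pass (modulo the harmless swap of $b$ for $b^{-1}$ in a few places).
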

\begin{proof}
	We will prove this by proving the following for each index $(r,l)$:  $$(\psi_{k+1} \epsilon_k\pmb\eta)(r,l)=(\Pi^{(k+1)}\mathcal D_{2^k-1}\psi_k(\pmb\eta))(r,l).$$  We start with \cref{ch4:defn:epsilon}, the definition of symmetric gradient $\epsilon$. For any pixel location $r\in\{1,..,N\}$ we can write $\pmb  \eta$ as a linear combination of the basis vectors $\pmb \omega_{\bn i}$ with coefficients $\eta_{\bn i}$ as: 
	$$\epsilon_k\pmb \eta(r)\df|||^{(k+1)}\Big[\sum_{\bn i \in \{0,1\}^k}\Big(({\bn D_x\eta_{\bn i}} )(r)\pmb \omega_{\bn i}\otimes \omega_0+({{  \bn D_y\eta_{\bn i}} } ) (r)\pmb \omega_{\bn i}\otimes  \omega_1\Big)\Big].$$
	Let $\bn w_{\bn j}$ be the tuple $(\bn w_{j_1},...,\bn w_{j_k})$ (recall the definitions $\bn w_0=(0,1)^T$ and $\bn w_1=(1,0)^T$) for some $\bn j=(j_1,...,j_k) \in\{0,1\}^k$. Also, observe that the $(r,l)$ element of $\psi_{k+1}\epsilon_k\pmb \eta$ is $(\epsilon_k(\pmb \eta)(r))(\bn w_{\bn j})$, where $\bn j =b(l).$ With this we have, $$\big(\epsilon_k\pmb \eta(r)\big)(\bn w_{\bn j})=\Big[\sum_{\bn i \in \{0,1\}^k}|||^{(k+1)}\Big(({\bn D_x\eta_{\bn i}} )(r)(\pmb \omega_{\bn i}\otimes\omega_0)(\bn w_{\bn j}) +({{  \bn D_y\eta_{\bn i}} } ) (r) (\pmb\omega_{\bn i}\otimes\omega_1)(\bn w_{\bn j})\Big)\Big].$$
	Let $B$ be the matrix $\mathcal D_{2^k-1}\psi_k(\pmb \eta)$, then  by the definition of $\mathcal D_{2^k-1}$ we have that $B_{r,b^{-1}((\bn i,0))}=({{  \bn D_x\eta_{\bn i}} } ) (r),$ similarly, $B_{r,b^{-1}((\bn i,1))}=({{  \bn D_y\eta_{\bn i}} } ) (r).$ This is because at odd indices we have $\bn D_y$ and at even indices we have $\bn D_x$. With these we have: \\
	$$\big(\epsilon_k\pmb \eta(r)\big)(\bn w_{\bn j})=\frac{1}{k+1!}\Big[\sum_{\bn i \in \{0,1\}^k}\sum_{\pi \in S_{k+1}}\Big(B_{r,b^{-1}((\bn i,0))}(\pmb \omega_{\bn i}\otimes\omega_0)(\bn w_{ f_{\pi}(\bn j)}) +B_{r,b^{-1}((\bn i,1))} (\pmb\omega_{\bn i}\otimes\omega_1)(\bn w_{f_{\pi}(\bn j)} )\Big)\Big].$$ 
	The two terms inside the summation can be combined into a single summation by defining a bigger vector $\bn m =(\bn i,*)$. Here, $*$ can be $0$ or 1. Now, the expression becomes:
	$$\big(\epsilon_k\pmb \eta(r)\big)(\bn w_{\bn j})=\frac{1}{k+1!}\Big[\sum_{\bn m\in \{0,1\}^{k+1}}\sum_{\pi \in S_{k+1}}\Big(B_{r,b^{-1}(\bn m)}(\pmb \omega_{\bn m})(\bn w_{ f_{\pi}(\bn j)})\Big)\Big].$$ 
	
	Now, $(\pmb \omega_{\bn m})(\bn w_{f_{\pi}(\bn j)} )$ is one if and only if $\bn m=f_{\pi}(\bn j)$ and $0$ else. Observe that $\bn j=b(l)$.
	With this, we have:
	$$\big(\epsilon_k\pmb \eta(r)\big)(\bn w_{\bn j})=\frac{1}{k+1!}\Big[\sum_{\pi \in S_{k+1}}\Big(B_{r,b^{-1}(f_{\pi}(b(l)))}\Big)\Big]=(\Pi^{(k+1)}B)_{r,l}=(\Pi^{(k+1)} \mathcal D_{2^k-1}(\psi_k)(\pmb \eta))_{r,l}.$$

	
\end{proof}
Now, with the help of the above proposition we give a tensor free representation of TGV.
\tgvdir*
\begin{proof}
	Recalling the definition of the   total generalized variation  (\cref{def:discrete_tgv}):
	\begin{equation}
	{TGV}^n(\mathbf g)=\inf_{\bn u_i\in \mathbb S^{(i)}_{N},\bn u_0=(\mathbf g),\bn u_{n}=\bn 0}\sum_{i=0}^{n-1}\alpha_{n-i-1}\|\mathcal \epsilon_i \bn u_i-\bn u_{i+1}\|.
	\end{equation}
	Let $\pmb \eta(j)=\sum_{\bn p \in \{0,1\}^i}\eta_{\bn p}(j)\pmb \omega_{\bn p}$ be any symmetric i-tensor image, 
	then from \cref{prop:direct} we obtain for any pixel location $j$: 
	\begin{align}
	&( \epsilon_{i}(\pmb \eta))(j)&\\&=|||^{(i+1)}\Big[\sum_{\bn p \in \{0,1\}^i}\Big(({\bn D_x\eta_{\bn p}} )(j)\pmb \omega_{\bn p}\otimes \omega_0+({{  \bn D_y\eta_{\bn p}} } ) (j)\pmb \omega_{\bn p}\otimes  \omega_1\Big)\Big].\\&
	=(\psi_{i+1}^{-1}\Pi^{(i+1)}\mathcal D_{2^i-1}\psi_i(\pmb \eta))(j)
	\end{align}
	With the above result $TGV^n$ becomes:
	\begin{align}
	{TGV}^n(\mathbf g)&=\inf_{\bn u_i\in \mathbb S^{(i)}_{N},\bn u_0=(\mathbf g),\bn u_{n}=\bn 0}\sum_{i=0}^{n-1}\alpha_{n-i-1}\|\psi_{i+1}^{-1}\Pi^{(i+1)}\mathcal D_{2^i-1}\psi_i \bn  u_i - \bn u_{i+1}\|\\&\stackrel{(p)}{=}\inf_{\bn u_i\in \mathbb S^{(i)}_{N},\bn u_0=(\mathbf g),\bn u_{n}\bn 0}\sum_{i=0}^{n-1}\alpha_{n-i-1}\|\psi_{i+1}^{-1}\Pi^{(i+1)}\mathcal D_{2^i-1}\psi_i \bn  u_i - \psi_{i+1}^{-1}\psi_{i+1}\bn u_{i+1}\|.\\&\stackrel{(q)}{=}\inf_{\bn v_i\in  \mathcal R(\Pi^{(i)}),\bn v_0=(\mathbf g),\bn u_{n}=\bn 0}\sum_{i=0}^{n-1}\alpha_{n-i-1}\|\Pi^{(i+1)}\mathcal D_{2^i-1} \bn  v_i - \bn v_{i+1}\|_{1,2}.
	\end{align}
	Here, (p) follows from the definition of the norm of the tensor and (q) follows by substituting $\psi_i\bn u_i=\bn v_i$ for $i=0,1,..,n$.
\end{proof}
To give the compact and tensor free represntation we give the matrix equivalent of $\epsilon_k$ using the orthogonal basis of symmetric tensor images rather than the standard basis of tensor images. This greatly simplifies the TGV expression.
\begin{proposition}\label{prop:eps}(Derivation of symmetric derivative operator $\epsilon_k$  in compact matrix form using  basis $\{\mathbf e_j^{(k)}\}_{j=0}^k$ of $\mathbb S_N^{(k)}$ ) 
	
	The linear operator $\phi_{k+1}\epsilon_k\phi_k^{-1}: \mathbb R^{N\times (k+1)}\rightarrow \mathbb R^{N\times (k+2)}$ can be written as a composition of two linear operators. $\mathcal A_k:\mathbb R^{N \times (2k+2) }\rightarrow \mathbb R^{N \times (k+2) }$ and $\mathcal D_k:\mathbb R^{N \times (k+1) }\rightarrow \mathbb R^{N \times (2k+2) }.$ Here, $\mathcal A_k(\mathbf z)=\mathbf z\cdot M_k$, $M_k=\begin{bmatrix} &1 & 0 &0&...&0&0\\&0&\frac{\sqrt{\Mycomb[k]{1}}}{\sqrt {\Mycomb[k+1]{1}}}&0&...&0&0\\ &0&\frac{\sqrt{\Mycomb[k]{0}}}{\sqrt {\Mycomb[k+1]{1}}}&0&...&0&0\\&0&0&\frac{\sqrt{\Mycomb[k]{2}}}{\sqrt {\Mycomb[k+1]{2}}}&...&0&0\\
	&0&0&\frac{\sqrt{\Mycomb[k]{1}}}{\sqrt {\Mycomb[k+1]{2}}}&...&0&0\\
	&\vdots&\vdots&\vdots&...&\vdots&\vdots\\
	&0&0&0&...&\frac{\sqrt{\Mycomb[k]{k}}}{\sqrt {\Mycomb[k+1]{k}}}&0\\
	&0&0&0&...&\frac{\sqrt{\Mycomb[k]{k-1}}}{\sqrt {\Mycomb[k+1]{k}}}&0\\
	&0&0&0&...&0&1\\
	\end{bmatrix};$ $$\mathcal D_k([\mathbf y_0, \mathbf y_1,...,\mathbf y_k])=[\mathbf D_x \mathbf y_0,\mathbf D_y \mathbf y_0,\mathbf D_x \mathbf y_1,\mathbf D_y \mathbf y_1,...,\mathbf D_x \mathbf y_k,\mathbf D_y \mathbf y_k].$$Also, $\mathcal A_k \mathcal A_k^T=\mathcal I.$ \label{matrix}\end{proposition}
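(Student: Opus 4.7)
The plan is to work with the symmetric tensor basis representation given by Lemma \ref{ch4:remark:eq}. Starting from any $\pmb{\beta}\in\mathbb{S}_N^{(k)}$ with $\pmb{\beta}(i)=\sum_{r=0}^{k}\beta_r(i)\mathbf{e}_r^{(k)}$, the matrix $\phi_k(\pmb{\beta})$ has columns $\beta_0,\ldots,\beta_k$. The identity from Lemma \ref{ch4:remark:eq} lets me write $\epsilon_k(\pmb{\beta})(i)$ as the symmetric projection of $\sum_r\bigl[(\mathbf{D}_x\beta_r)(i)\,\mathbf{e}_r^{(k)}\otimes\omega_0+(\mathbf{D}_y\beta_r)(i)\,\mathbf{e}_r^{(k)}\otimes\omega_1\bigr]$. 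The first step is therefore to understand how $|||^{(k+1)}$ acts on the rank-one products $\mathbf{e}_r^{(k)}\otimes\omega_0$ and $\mathbf{e}_r^{(k)}\otimes\omega_1$ when expressed in the orthonormal basis $\{\mathbf{e}_s^{(k+1)}\}_{s=0}^{k+1}$ of $\mathbb{S}^{(k+1)}$.

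Expanding $\mathbf{e}_r^{(k)}=\frac{1}{\sqrt{\binom{k}{r}}}\sum_{s(\mathbf{j})=r}\pmb{\omega}_{\mathbf{j}}$ and invoking \cref{cor:rel}, every term $\pmb{\omega}_{(\mathbf{j},0)}$ has weight $r$ and projects to $\frac{1}{\sqrt{\binom{k+1}{r}}}\mathbf{e}_r^{(k+1)}$, and every term $\pmb{\omega}_{(\mathbf{j},1)}$ has weight $r+1$ and projects to $\frac{1}{\sqrt{\binom{k+1}{r+1}}}\mathbf{e}_{r+1}^{(k+1)}$. Counting the $\binom{k}{r}$ summands gives
\begin{align*}
|||^{(k+1)}(\mathbf{e}_r^{(k)}\otimes\omega_0)&=\tfrac{\sqrt{\binom{k}{r}}}{\sqrt{\binom{k+1}{r}}}\mathbf{e}_r^{(k+1)},\\
|||^{(k+1)}(\mathbf{e}_r^{(k)}\otimes\omega_1)&=\tfrac{\sqrt{\binom{k}{r}}}{\sqrt{\binom{k+1}{r+1}}}\mathbf{e}_{r+1}^{(k+1)}.
\end{align*}
Collecting terms by index $s$ of $\mathbf{e}_s^{(k+1)}$, the coefficient of $\mathbf{e}_s^{(k+1)}$ in $\epsilon_k(\pmb{\beta})(i)$ becomes $\mathbf{D}_x\beta_0$ for $s=0$, $\mathbf{D}_y\beta_k$ for $s=k+1$, and $\tfrac{\sqrt{\binom{k}{s}}}{\sqrt{\binom{k+1}{s}}}\mathbf{D}_x\beta_s+\tfrac{\sqrt{\binom{k}{s-1}}}{\sqrt{\binom{k+1}{s}}}\mathbf{D}_y\beta_{s-1}$ for $1\le s\le k$. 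I would then match this against the right-hand side: $\mathcal{D}_k$ produces the interleaved matrix $[\mathbf{D}_x\beta_0,\mathbf{D}_y\beta_0,\mathbf{D}_x\beta_1,\mathbf{D}_y\beta_1,\ldots]$, and right-multiplication by $M_k$ picks out exactly these two entries per column with the prescribed coefficients. The identification $\phi_{k+1}\epsilon_k\phi_k^{-1}=\mathcal{A}_k\circ\mathcal{D}_k$ then follows column by column.

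For the second claim, I observe that $\mathcal{A}_k(z)=zM_k$ has adjoint $w\mapsto wM_k^T$ under the Frobenius inner product, so $\mathcal{A}_k\mathcal{A}_k^T=\mathcal{I}$ is equivalent to $M_k^TM_k=I_{k+2}$. The columns of $M_k$ have disjoint supports, so orthogonality is automatic; for norms, the two nontrivial entries of column $s$ give squared norm $\tfrac{\binom{k}{s}+\binom{k}{s-1}}{\binom{k+1}{s}}=1$ by Pascal's identity, while the first and last columns trivially have norm $1$.

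The main technical obstacle is the index bookkeeping in the first step: tracking how the $\binom{k}{r}$ standard-basis summands of $\mathbf{e}_r^{(k)}$ distribute across the weight classes of $\{0,1\}^{k+1}$, and then reorganizing the double sum over $r$ into a single sum over the output index $s$ so that the coefficients align precisely with the pattern encoded in $M_k$. Once this alignment is transparent, both assertions of the proposition fall out cleanly.
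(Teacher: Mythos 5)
Your proposal is correct and follows essentially the same route as the paper's proof: it invokes Lemma \ref{ch4:remark:eq}, computes $|||^{(k+1)}(\mathbf e_r^{(k)}\otimes\omega_0)$ and $|||^{(k+1)}(\mathbf e_r^{(k)}\otimes\omega_1)$ via \cref{cor:e} and \cref{cor:rel}, regroups the coefficients of $\mathbf e_s^{(k+1)}$ to identify $\mathcal A_k\circ\mathcal D_k$, and reduces $\mathcal A_k\mathcal A_k^T=\mathcal I$ to $M_k^TM_k=I$ via Pascal's identity. No substantive differences from the paper's argument.
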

\begin{proof}
	Let $\pmb \beta=\sum_{r=0}^{k}\beta_r \mathbf e_{r}^{(k)}\in \mathbb S^{(k)}_N.$ Consider any pixel location $i$, the operator $\epsilon_k$ is given as:\begin{equation}
	\mathbb (\epsilon_k(\pmb \beta))(i)=|||^{(k+1)}\Big[\sum_{r=0}^{(k)}\Big(({\mathbf D_x  \beta_r })(i)\mathbf e_{r}^{(k)}\otimes \omega_0+({\mathbf D_y  \beta_r })(i)\mathbf e_{r}^{(k)}\otimes  \omega_1\Big)\Big].\end{equation}
	Since, $|||^{(k+1)}$ is a linear operator we have,
	\begin{equation}
	\mathbb (\epsilon_k(\pmb \beta))(i)=\sum_{r=0}^{(k)}\Big(({\mathbf D_x  \beta_r })(i)|||^{(k+1)}(\mathbf e_{r}^{(k)}\otimes \omega_0)+({\mathbf D_y  \beta_r })(i)\mathbf |||^{(k+1)}(\mathbf e_{r}^{(k)}\otimes  \omega_1)\Big).\end{equation} Consider,
	\begin{eqnarray} \label{eq:discrete1} \nonumber |||^{(k+1)}(\mathbf e_{r}^{(k)}\otimes\omega_0)\stackrel{\text{\cref{cor:e}}}{=}&\frac{1}{\sqrt{\Mycomb[k]{r}}}\sum_{\mathbf p\in\{0,1\}^k,\sum_l p_l=r} |||^{(k+1)}\omega_{p_1}\otimes...\otimes\omega_{p_k}\otimes\omega_0\\ &\stackrel{\text{\cref{cor:rel}}}{=}\frac{1}{\sqrt{\Mycomb[k]{r}}}\Mycomb[k]{r}\frac{\mathbf e^{(k+1)}_r}{\sqrt{\Mycomb[k+1]{r}}}.\end{eqnarray} Similarly,
	\begin{equation} \label{eq:discrete2}|||^{(k+1)}(\mathbf e_{r}^{(k)}\otimes\omega_1)=\frac{\sqrt{\Mycomb[k]{r}}}{\sqrt{\Mycomb[k+1]{r+1}}}{\mathbf e^{(k+1)}_{r+1}}.\end{equation}
	Using eq. \eqref{eq:discrete1} and \cref{eq:discrete2} we get,
	\begin{eqnarray*}
		&\mathbb (\epsilon_k(\pmb \beta))(i)=\sum_{r=0}^{k}\Big(({\mathbf D_x  \beta_r })(i)\frac{\sqrt{\Mycomb[k]{r}}}{\sqrt{\Mycomb[k+1]{r}}}{\mathbf e^{(k+1)}_{r}}+({\mathbf D_y  \beta_r })(i) \frac{\sqrt{\Mycomb[k]{r}}}{\sqrt{\Mycomb[k+1]{r+1}}}{\mathbf e^{(k+1)}_{r+1}}\Big)\\&=({\mathbf D_x  \beta_0 })(i)\mathbf e^{(k+1)}_0+({\mathbf D_y  \beta_k })(i)\mathbf e_{k+1}^{(k+1)}+\sum_{r=1}^{(k)}\Big(({\mathbf D_x  \beta_r })(i)\frac{\sqrt{\Mycomb[k]{r}}}{\sqrt{\Mycomb[k+1]{r}}}+({\mathbf D_y  \beta_{r-1} })(i) \frac{\sqrt{\Mycomb[k]{r-1}}}{\sqrt{\Mycomb[k+1]{r}}}\Big){\mathbf e^{(k+1)}_{r}}.\end{eqnarray*}
	
	By collecting the coefficients of $\bn e_{j}^{(k+1)}$ for $j=0,1,...,k+1$, it can be seen that $\phi_{k+1}\epsilon_k\phi_k^{-1}=\mathcal A_k\circ \mathcal D_k.$ To show that $\mathcal A_k\circ\mathcal A_k^T=\mathcal I:$ first we see that $\mathcal A_k^T(\mathbf z)=\mathbf z\cdot M_k^T.$ Therefore, it is sufficient show that $M_k^TM_k=\mathbf I$. On computation, it can be seen that $M_k^T M_k$ is  a diagonal matrix with values $\frac{\Mycomb[k]{r}}{\Mycomb[k+1]{r}}+\frac{\Mycomb[k]{r-1}}{\Mycomb[k+1]{r}}.$ Since, $\Mycomb[k]{r-1}+\Mycomb[k]{r}=\Mycomb[k+1]{r},$ each diagonal element is $1.$ Hence, $M_k^TM_k=\mathbf I.$

\end{proof}
Now, we finally give the compact representation of TGV using the following theorem.
\tgvd*
\begin{proof}
	We begin the proof from  the definition of the   total generalized variation  (\cref{def:discrete_tgv}):
	\begin{equation}
	{TGV}^n(\mathbf g)=\inf_{\bn u_i\in \mathbb S^{(i)}_{N},\bn u_0=(\mathbf g),\bn u_{n}=\bn 0}\sum_{i=0}^{n-1}\alpha_{n-i-1}\|\mathcal \epsilon_i \bn u_i-\bn u_{i+1}\|.
	\end{equation}
	
	With the above proposition \cref{prop:eps}, we have:
	\begin{align}TGV^n(\mathbf g)&=\inf_{\bn u_i\in \mathbb S^{(i)}_{ N},\bn u_0=\mathbf g,\bn u_{n}=0}\sum_{i=0}^{n-1}\alpha_{n-i-1}\|\phi_{i+1}^{-1}\mathcal A_i\circ \mathcal D_i\phi_i \bn u_i-\bn u_{i+1}\|\\&
	=\inf_{\bn u_i\in \mathbb S^{(i)}_{ N},u_0=(\mathbf g),u_{n}=0}\sum_{i=0}^{n-1}\alpha_{n-i-1}\|\phi_{i+1}^{-1}\mathcal A_i\circ \mathcal D_i\phi_i \bn u_i-\phi_{i+1}^{-1}\phi_{i+1}\bn u_{i+1}\|\\&\stackrel{(r)}{=}\inf_{\bn p_i\in \mathbb R^{N\times (i+1) },\bn p_0=\mathbf g,\bn p_{n}=0}\sum_{i=0}^{n-1}\alpha_{n-i-1}\|\mathcal A_i\circ \mathcal D_i \bn p_i-\bn p_{i+1}\|_{1,2}\\
	\end{align}
	In the above expression (r) follows from \cref{ch4:lemma:norm} and substituting $\phi_i \bn u_i=\bn p_i$ for $i=0,1,..,n$ we get the result.
\end{proof}
We use the following lemma for proving the equivalence of the two given reprsentations.
\begin{lemma}\label{ch4:lemma:eq_cost}
	Consider any $\bn p=[\bn p_0,...,\bn p_k]\in \mathbb R^{N\times(k+1)}$, $\bn q=[\bn q_0,...,\bn q_{k+1}]\in \mathbb R^{N\times(k+2)}$, $\bn u=[\bn u_0,...,\bn u_{2^{k}-1}]\in \mathbb R^{N\times 2^k}$, $\bn v=[\bn v_0,...,\bn v_{2^{k+1}-1}]\in \mathbb R^{N\times(2^{k+1})}$. If $\bn u_j=\frac{\bn p_{s(b(j))}}{\sqrt{\Mycomb[k]{s(b(j))}}}$ for $j=0,...,2^k-1$ and $\bn v_l=\frac{\bn p_{s(b(l))}}{\sqrt{\Mycomb[k]{s(b(l))}}}$ for $l=0,...,2^{k+1}-1$ then $\|\mathcal A_k\mathcal D_k\bn p-\bn q\|_{1,2}=\|\Pi^{(k+1)}\mathcal D_{2^k-1}\bn u-\bn v\|_{1,2}.$
\end{lemma}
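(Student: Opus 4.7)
The plan is to recognize the lemma as a translation identity between two different matrix encodings of the same underlying symmetric-tensor computation. The variables $\bn p$ and $\bn u$ (respectively $\bn q$ and $\bn v$) both describe the same symmetric $k$-tensor field on $N$ pixels: $\bn p$ holds the orthonormal-basis coefficients (the image under $\phi_k$) while $\bn u$ holds the full standard-basis coefficients (the image under $\psi_k$). The hypothesized relation $\bn u_j = \bn p_{s(b(j))}/\sqrt{\Mycomb[k]{s(b(j))}}$ is precisely the change-of-basis formula that drops out of $\bn e_r^{(k)} = \Mycomb[k]{r}^{-1/2}\sum_{s(\bn m)=r}\pmb\omega_{\bn m}$ from \cref{cor:e}, and analogously for the $(k+1)$ case relating $\bn q$ to $\bn v$ (interpreting the hypothesis with $\Mycomb[k+1]{s(b(l))}$ in the denominator, which is the evident intended statement).

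With this identification in hand, I would proceed in three steps. First, set $\pmb\alpha = \phi_k^{-1}(\bn p) \in \mathbb S_N^{(k)}$ and $\pmb\gamma = \phi_{k+1}^{-1}(\bn q) \in \mathbb S_N^{(k+1)}$; by the change-of-basis computation above, $\psi_k(\pmb\alpha) = \bn u$ and $\psi_{k+1}(\pmb\gamma) = \bn v$. Second, invoke \cref{prop:direct} to obtain $\Pi^{(k+1)}\mathcal D_{2^k-1}\bn u = \psi_{k+1}(\epsilon_k\pmb\alpha)$, and invoke \cref{prop:eps} to obtain $\mathcal A_k\mathcal D_k\bn p = \phi_{k+1}(\epsilon_k\pmb\alpha)$. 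Subtracting the corresponding ``subtractors'' on each side gives
\[
\mathcal A_k\mathcal D_k\bn p - \bn q = \phi_{k+1}(\epsilon_k\pmb\alpha - \pmb\gamma), \qquad \Pi^{(k+1)}\mathcal D_{2^k-1}\bn u - \bn v = \psi_{k+1}(\epsilon_k\pmb\alpha - \pmb\gamma).
\]
Hence the two matrices whose $\|\cdot\|_{1,2}$ norms we are comparing are simply the $\phi_{k+1}$ and $\psi_{k+1}$ encodings of a single symmetric $(k+1)$-tensor field $\pmb\tau := \epsilon_k\pmb\alpha - \pmb\gamma$.

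Third, it remains to show that both encodings produce the same mixed norm. \cref{ch4:lemma:norm} already delivers $\|\phi_{k+1}(\pmb\tau)\|_{1,2} = \|\pmb\tau\|$, so it suffices to verify $\|\psi_{k+1}(\pmb\tau)\|_{1,2} = \|\pmb\tau\|$ as well. Pixel by pixel this reduces to showing, for any symmetric tensor $\pmb\tau_0 = \sum_r\beta_r\bn e_r^{(k+1)} = \sum_{\bn i}\tau_{\bn i}\pmb\omega_{\bn i}$, that $\sum_{\bn i\in\{0,1\}^{k+1}}\tau_{\bn i}^2 = \sum_{r=0}^{k+1}\beta_r^2$. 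Grouping the $\bn i$'s by the value $s(\bn i) = r$, the corresponding level set has cardinality $\Mycomb[k+1]{r}$ and on it every $\tau_{\bn i}$ equals $\beta_r/\sqrt{\Mycomb[k+1]{r}}$, so the level set contributes exactly $\beta_r^2$ to the sum. This combinatorial bookkeeping is the only non-trivial calculation and constitutes the main (though elementary) obstacle; once it is in place the two norms are recognized as Euclidean norms of two coordinate representations of the same symmetric tensor, and the lemma follows.
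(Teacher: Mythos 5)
Your proof is correct, but it takes a genuinely different route from the paper's. The paper proves the lemma by a self-contained matrix computation: it writes out the columns of $\mathcal D_{2^k-1}\bn u$ in terms of $\bn p$, counts the even and odd indices $r\in[2^{k+1}]$ with $s(b(r))=j$ (getting $\Mycomb[k]{j}$ and $\Mycomb[k]{j-1}$ respectively), shows that the $J$-th column of $\Pi^{(k+1)}\mathcal D_{2^k-1}\bn u-\bn v$ is $1/\sqrt{\Mycomb[k+1]{j}}$ times the $j$-th column of $\mathcal A_k\mathcal D_k\bn p-\bn q$, and concludes because each such column occurs with multiplicity $\Mycomb[k+1]{j}$. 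You instead recognize $\bn u=\psi_k(\phi_k^{-1}(\bn p))$ and $\bn v=\psi_{k+1}(\phi_{k+1}^{-1}(\bn q))$, invoke \cref{prop:direct} and \cref{prop:eps} to identify both differences as the $\psi_{k+1}$- and $\phi_{k+1}$-images of the single tensor field $\epsilon_k\pmb\alpha-\pmb\gamma$, and reduce the claim to the fact that both isomorphisms are isometric for the $\|\cdot\|_{1,2}$ norm (your final combinatorial check is exactly the orthonormality of $\{\bn e_r^{(k+1)}\}$ from \cref{cor:e}, and the $\psi$ half is immediate from the norm-of-a-tensor remark). Your argument is shorter, reuses machinery already proved, and makes the origin of the constants transparent; the paper's argument has the virtue of staying entirely at the matrix level, which matters because this lemma underlies \cref{ch4:remark:dirrep}, whose stated purpose is to pass between the direct and compact forms \emph{without} the tensor formalism. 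Two small points in your favor rather than against: you correctly diagnose and repair the typo in the hypothesis on $\bn v_l$ (which must read $\bn q_{s(b(l))}/\sqrt{\Mycomb[k+1]{s(b(l))}}$ for the statement to be well-posed), and your $\pmb\alpha=\phi_k^{-1}(\bn p)$ is automatically symmetric, so \cref{prop:direct} does apply. There is no circularity, since neither proposition depends on this lemma.
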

\begin{proof}
	By the given definition of $\bn u$ we have:
	\begin{align}\mathcal D_{2^{k}-1}\bn u&=[{\bn D_x \bn u_0},{\bn D_y \bn u_0},{\bn D_x \bn u_1},{\bn D_y \bn u_1},...,{\bn D_x \bn u_k},{\bn D_y \bn p_k}]\\&=[\frac{\bn D_x \bn p_0}{\sqrt{\Mycomb[k]{0}}},\frac{\bn D_y \bn p_0}{\sqrt{\Mycomb[k]{0}}},\frac{\bn D_x \bn p_1}{\sqrt{\Mycomb[k]{1}}},\frac{\bn D_y \bn p_1}{\sqrt{\Mycomb[k]{1}}},...,\frac{\bn D_x \bn p_k}{\sqrt{\Mycomb[k]{k}}},\frac{\bn D_y \bn p_k}{\sqrt{\Mycomb[k]{k}}}].\end{align}
	For any $i\in \mathbb N$, define the set $[i]\df \Set{r\in \mathbb N\cup{0}}{r<i}$. Now, the definition of $\mathcal D_{2^k-1}(\cdot)$ implies  that for any even $r\in [2^{k+1}]$, $(\mathcal D_{2^{k}-1}\bn u)_{:,r}=\bn D_x \bn u_{r/2}$, and for $r$ odd $(\mathcal D_{2^k-1}\bn u)_{:,r}=\bn D_y \bn u_{(r-1)/2}$. As multiplication by 2 only shifts the binary code to one place left, $s(b(r/2))=s(b(r))$  (for $r$ even) $s(b((r'-1)/2))=s(b(r'))-1.$ (for $r'$ odd). Further from the given relation between $\bn u$ and $\bn p$ we have for \begin{itemize}\item $r$ even: $(\mathcal D_{2^k-1}\bn u)_{:,r}=\frac{1}{\sqrt{\Mycomb[k]{s(b(r))}}}\bn D_x \bn p_{s(b(r))}$
		\item $r$ odd:  $(\mathcal D_{2^k-1}\bn u)_{:,r}=\frac{1}{\sqrt{\Mycomb[k]{s(b(r))-1}}}\bn D_y \bn p_{s(b(r))-1}$\end{itemize}
	As $\Pi^{(k+1)}$ only sums the columns, we compute the $J^{th}$ column of $\Pi^{(k+1)}\mathcal D_{2^{k}-1}\bn u$, $(\Pi^{(k+1)}\mathcal D_{2^{k}-1}\bn u)_{:,J}$ as: 
	\begin{align}(\Pi^{(i+1)}\mathcal D_{2^{i}-1}\bn u)_{:,J}&=\frac{1}{\Mycomb[k+1]{j}}\sum_{\Set{r}{s(b(r))=j}}(\mathcal D_{2^{i}-1}\bn u)_{:,r} (\text { let }j=s(b(J)))\\&=\frac{1}{\Mycomb[k+1]{j}}\Big[\sum_{\Set{r \text{ even}}{s(b(r))=j}}\frac{\bn D_x \bn p_{s(b(r))}}{\sqrt{\Mycomb[k]{s(b(r))}}}+\sum_{\Set{r \text{ odd}}{s(b(r))=j}}\frac{\bn D_y \bn p_{s(b(r))-1}}{\sqrt{\Mycomb[k]{s(b(r))-1}}}\Big]\\&=\frac{1}{\Mycomb[k+1]{j}}\Big[\frac{\bn D_x \bn p_{j}}{\sqrt{\Mycomb[k]{j}}}\sum_{\Set{r \text{ even}}{s(b(r))=j}}1+\frac{\bn D_y \bn p_{j-1}}{\sqrt{\Mycomb[k]{j-1}}}\sum_{\Set{r \text{ odd}}{s(b(r))=j}}1\Big]
	\end{align}
	Now, summation of 1 over any set is same as  the number of elements in that set. Let $Crd(S)$ denote the number of elements in S. First we compute,
	$Crd(\Set{r\in [2^{k+1}]}{\text{r odd, } s(b(r))=j})=Crd(\Set{2l+1\in [2^{k+1}]}{l\in [2^k], s(b(2l+1))=j})$.  Using $s(b(2l+1))=s(b(l))+1$  we get:$Crd(\Set{2l+1\in [2^{k+1}]}{l\in [2^k], s(b(2l+1))=j})=Crd(\Set{l\in [2^k]}{s(b(l))=j-1}).$ Now, $Crd(\Set{l\in [2^k]}{s(b(l))=j-1})$ is the number of non-negative integers less than $2^k$ which have $j-1$ ones in their binary code. Therefore, $Crd(\Set{l\in [2^k]}{s(b(l))=j-1})=\Mycomb[k]{j-1}.$ Similarly, $Crd(\Set{r \text{ even}}{s(b(r))=j})=\Mycomb[k]{j}.$
	Plugging these we get:$(\Pi^{(i+1)}\mathcal D_{2^{i}-1}\bn u)_{:,J}=\frac{1}{\sqrt{\Mycomb[k+1]{j}}}\Big[\frac{\sqrt{\Mycomb[k]{j}}}{\sqrt{\Mycomb[k+1]{j}}}\bn D_x \bn p_{j}+\frac{\sqrt{\Mycomb[k]{j-1}}}{\sqrt{\Mycomb[k+1]{j}}}\bn D_y \bn p_{j-1}\Big]$.
	Now, $(\Pi^{(k+1)}\mathcal D_{2^k-1}-\bn v)_{:,J}=\frac{1}{\sqrt{\Mycomb[k+1]{j}}}\Big[\frac{\sqrt{\Mycomb[k]{j}}}{\sqrt{\Mycomb[k+1]{j}}}\bn D_x \bn p_{j}+\frac{\sqrt{\Mycomb[k]{j-1}}}{\sqrt{\Mycomb[k+1]{j}}}\bn D_y \bn p_{j-1}-\bn q_{j}\Big].$ Comparing with the expressions of $\mathcal A_k\mathcal D_k$ and using the fact that there are $\Mycomb[k+1]{j}$ columns with $s(b(\cdot))=j$ we get the result.
\end{proof}

\eqtgv*
\begin{proof}
	Denote $S_D(\bn g)=\Set{{\sum_{i=0}^{n-1}}\alpha_{n-i-1}\|\Pi^{(i+1)}\mathcal D_{2^i-1}\bn u_i-\bn u_{i+1}\|_{1,2}}{\bn u_n=\bn 0,\bn u_0=\bn g ,\bn u_i\in \mathcal R(\Pi^{(i)})},$ and $S_C(\bn g)=\Set{{\sum_{i=0}^{n-1}}\alpha_{n-i-1}\|\mathcal A_i\mathcal D_{i}\bn p_i-\bn p_{i+1}\|_{1,2}}{\bn p_n=\bn 0,\bn p_0=\bn g ,\bn p_i\in \mathbb R^{N\times{(i+1)}}}.$ We prove the above result by showing that $S_D(\bn g)=S_C(\bn g).$ First we show that $S_C(\bn g)\subseteq S_D(\bn g).$ Consider any element $e_C$ of $S_C(\bn g).$ Then $$e_C=\sum_{i=0}^{n-1}\alpha_{n-i-1}\|\mathcal A_i\mathcal D_{i}\bn p_i-\bn p_{i+1}\|_{1,2}\text { where }{\bn p_n=\bn 0,\bn p_0=\bn g ,\bn p_i\in \mathbb R^{N\times{(i+1)}}\text{ for }i =1,...,n-1}.$$ Choose $(\bn u_i)_{:,j}=\frac{(\bn p_i) _{:,s(b(j))}}{\sqrt{\Mycomb[i]{s(b(j))}}}$ for all $i\in [n+1]$ and all $j\in [2^i]$. Then, by \cref{ch4:lemma:eq_cost} we have that $e_C=\sum_{i=0}^{n-1}\alpha_{n-i-1}\|\Pi^{(i+1)}\mathcal D_{2^i-1}\bn u_i-\bn u_{(i+1)}\|_{1,2}.$ Therefore, $e_C\in S_D(\bn g).$ Now, we show $S_D(\bn g)\subseteq S_C(\bn g).$ Consider any element $e_D\in S_D(\bn g).$ Then, $e_D=\|\Pi^{(i+1)}\mathcal D_{2^i-1}\bn u_i-\bn u_{i+1}\|_{1,2}$ where $\bn u_n=\bn 0,\bn u_0=\bn g \text{ ,and }\bn u_i\in \mathcal R(\Pi^{(i)})$ for $i=1,..,n-1.$As $\bn u_i\in \mathcal R(\Pi^{(i)})$ for all $i\in [n]$, we have $(\bn u_i)_{:,m}=(\bn u_i)_{:,o}$ if $s(b(m))=s(b(o)).$ Therefore, there exists $\bn q_i$ such that $\bn (u_i)_{:,j}=\frac{(\bn q_i) _{:,s(b(j))}}{\sqrt{\Mycomb[i]{s(b(j))}}}$ for all $i\in [n+1]$ and all $j\in [2^i]$. Therefore, by \cref{ch4:lemma:eq_cost} $e_D$ is also in $S_C(\bn g).$\end{proof}

\appendix

\section{References}
\bibliography{References}{}
\bibliographystyle{iopart-num}

\end{document}